\newcommand{\ubar}[1]{\underaccent{\bar}{#1}}
\theoremstyle{remark}
\newtheorem{proposition}{Proposition}
\newtheorem{corollary}{Corollary}
\newtheorem{lemma}{Lemma}
\newtheorem{definition}{Definition}
\pgfplotsset{compat=1.14}
\begin{document}
\title{State and Parameter Estimation for Natural Gas Pipeline Networks using Transient State Data}

\author{Kaarthik Sundar$^{*}$ \and Anatoly Zlotnik$^{\dagger}$
\thanks{$^{*}$Center for Nonlinear Studies, Los Alamos National
Laboratory, Los Alamos, New Mexico, USA. E-mail: \texttt{kaarthik01sundar@gmail.com}}\;
\thanks{$^{\dagger}$Applied Mathematics and Plasma Physics Division, Los Alamos National
Laboratory, Los Alamos, New Mexico, USA. E-mail: \texttt{azlotnik@lanl.gov}}\;
}

\markboth{Journal of \LaTeX\ Class Files,~Vol.~14, No.~8, November~2016}%
{Sundar \MakeLowercase{\textit{et al.}}: State and Parameter Estimation for Natural Gas Pipeline Networks using Transient State Data}

\maketitle

\begin{abstract}
We formulate two estimation problems for pipeline systems in which measurements of compressible gas flow through a network of pipes is affected by time-varying injections, withdrawals, and compression.  We consider a state estimation problem that is then extended to a joint state and parameter estimation problem that can be used for data assimilation.  In both formulations, the flow dynamics are described on each pipe by space- and time-dependent density and mass flux that evolve according to a system of coupled partial differential equations, in which momentum dissipation is modelled using the Darcy-Wiesbach friction approximation. These dynamics are first spatially discretized to obtain a system of nonlinear ordinary differential equations on which state and parameter estimation formulations are given as nonlinear least squares problems.  A rapid, scalable computational method for performing a nonlinear least squares estimation is developed. Extensive simulations and computational experiments on multiple pipeline test networks demonstrate the effectiveness of the formulations in obtaining state and parameter estimates in the presence of measurement and process noise. 
\end{abstract}

\begin{IEEEkeywords}
Natural Gas, Transient Flow, Nonlinear Least-Squares, Optimization, Time-periodicity, Estimation
\end{IEEEkeywords}
\IEEEpeerreviewmaketitle

\section{Introduction} \label{sec:intro}
The push towards cleaner electric power sources and the increasing supply of natural gas in the United States have led to a significant increase in the capacity and number of active gas-fired electric generators \cite{Lyons2013}. Such generators are used in the power grid to provide base load as well as to respond quickly to balance out fluctuations in electricity production by uncontrollable renewable sources such as wind and solar \cite{Rinaldi2001,Li2008}, resulting in significant and rapid variation in natural gas consumption. This in turn significantly impacts the pressure and the flow throughout the associated natural gas transmission network. These conditions contrast with historical withdrawals from gas transmission systems, which were more predictable and exhibited far slower variations.  Earlier methods for evaluating system capacities and solving optimization problems for natural gas transmission systems utilized steady-state gas flow models for which the state equations are algebraic \cite{Wong1968,Rothfarb1970}. However, the growing and increasingly intermittent dispatch of gas-fired power plants causes rapid changes in their gas consumption, which renders the steady-state assumption no longer representative of realistic operating conditions. Taking into account the time-varying natural gas consumption becomes even more important in the context of state and parameter estimation problems associated with the flow of compressible gas in large-scale pipeline networks. Many related problems in gas pipeline monitoring, leak detection, predictive simulation, and the like require precise information about the instantaneous network state \cite{Pal1991}. In practice, it would be prohibitively expensive to place pressure and flow meters everywhere throughout a pipeline system, which motivates the development of rapid and scalable state  and parameter estimation techniques that can be applied using transient measurements of pressure and gas-withdrawals obtained at the few fixed locations in the network where they are available. 

The transient flow of natural gas through a pipeline can be represented by the Euler equations for compressible gas flow in one dimension \cite{Osiadacz1984}, which is a system of coupled partial differential equations (PDEs). Simulating transient flows in pipelines on a scale of thousands of miles is itself problematic. Many methods are proposed in the literature to do so \cite{Grundel2013} and in fact, this is still an active area of research \cite{Chapman2005}. This difficulty of characterizing pipeline network dynamics presents challenges for estimation. The well understood approach of linearizing the PDEs around the steady-state mass flow rates and pressures has been used in the literature to obtain transfer function and state space models for the gas network dynamics \cite{Kralik1984, Reddy2006, Reddy2011, Alamian2012}.  These models are in turn used for state estimation using the traditional techniques available for linear systems.  But, as remarked previously, the emerging influence of gas-fired power plants causes a wide range of transient phenomena, which cause flows throughout the supplying pipeline networks to deviate substantially from the behavior approximated by steady-state models. Furthermore, the presence of bounds on the state variables complicates matters by adding additional algebraic constraints, making the adaptation of traditional Kalman filter-based techniques more difficult.  Therefore, there is a compelling need to develop estimation techniques for gas pipeline systems that use truly transient models, respect constraints on system states and actuation, and which can be applied to large systems with arbitrary network structures.

In this article, we formulate and solve the state estimation and joint state and parameter estimation problems for the dynamics of compressible gas flow through a pipeline network.   Our approach closely resembles the Moving Horizon Estimation (MHE) technique \cite{Rao2003} that is used for constrained state estimation for nonlinear systems. It differs from MHE techniques is the sense that time-periodicity conditions are imposed on the state variables.  This is done to provide additional structure to the dynamics in order to render the underlying approximation of the PDE constraints well-posed, as discussed in more detail in section \ref{subsec:network_model}. In contrast to previous estimation approaches proposed in the literature on pipeline systems, which relied on linearization techniques \cite{Kralik1984,Reddy2006,Alamian2012}, we approximate the system of PDEs using a new nonlinear control system model, the reduced network flow (RNF), derived from a model reduction of the gas network dynamics \cite{Grundel2013,Zlotnik2015,Zlotnik2015a}. The resulting RNF is a system of implicit nonlinear differential algebraic equations (DAEs) with time-varying injections and compression as control inputs; time-varying withdrawals and Darcy-Wiesbach friction factor as the parameters; and density and mass flow rates as states. The derivative terms in this RNF are further approximated using a finite difference scheme to obtain a set of nonlinear algebraic equations that approximate the underlying PDE dynamics with proven fidelity \cite{Zlotnik2015a,Dyachenko2017}.  The RNF scheme enables the estimation problems to be formulated as nonlinear programs (NLPs), and specifically as nonlinear least squares problems subject to nonlinear algebraic equations.  Such problems can be solved to local optimality using standard gradient descent techniques. The estimation formulations and algorithms are demonstrated using extensive simulation and computational experiments on a single pipe, a $4$-junction test instance, and a $25$-junction test instance, respectively.

The rest of the article is organized as follows. We first present the PDE model for the flow of compressible gas through a pipeline followed by the network flow-based model for a gas pipeline network with controllable actuators. The network flow model is a PDE-based model which is further reduced to a nonlinear DAE system and subsequently represented by an ordinary differential equation (ODE) system model via spatial discretization. Then, the mathematical assumptions on the ODE system are presented, and monotonicity properties of the actuated pipeline system model are verified. We then present implications of these assumptions on the estimation problem formulation and derive properties of the resulting state estimates.  We then formulate the state estimation and state and parameter estimation problems. This is followed by extensive computational experiments, conclusions, and discussion of promising future research directions.

\section{Modeling} \label{sec:model}
\subsection{Gas pipeline dynamics} \label{subsec:dynamics}
The flow of compressible gas within a horizontal pipe with slow transients that do no excite waves or shocks is adequately described using one-dimensional Euler equations \cite{Thorley1987} given by 
\begin{subequations}
\begin{flalign}
& \partial_t \rho + \partial_x \varphi = 0, & \label{eq:mass} \\
& \partial_t \varphi + \partial_x(\rho v^2) + \partial_x p + \rho g \sin \theta = -\frac{\lambda}{2D} \frac{\varphi |\varphi|}{\rho}. & \label{eq:momentum}
\end{flalign}
\label{eq:pde}
\end{subequations}
It is well-known in the literature \cite{Thorley1987} that the system of equations in \eqref{eq:pde} is hyperbolic. The Eq. \eqref{eq:mass} and \eqref{eq:momentum} are the mass and momentum balance equations, respectively. The variables $v, \rho, \varphi$, and $p$ in the Eq. \eqref{eq:pde} represent the instantaneous gas velocity, density, mass flux, and pressure, respectively. $v, \rho,$ and $\varphi$ are related as $\varphi = \rho v$ and all the variables are defined on the domain $[0,\ell] \times [0,T]$ where $\ell$ represents the length of the pipeline. The parameters in Eq. \eqref{eq:pde} are the Darcy-Wiesbach friction factor $\lambda$, acceleration due to gravity $g$, the pipe angle $\theta$, and pipe diameter $D$. The term on the right hand side of \eqref{eq:momentum} aggregates the friction effects. We assume that the pressure $p$ and density of the gas $\rho$, satisfy the equation of state $p = a^2 \rho$ with $a^2 = ZRT$, where $a$, $Z$, $R$, and $T$, are the speed of sound, gas compressibility factor, ideal gas constant, and constant temperature, respectively. 

The Eq. \eqref{eq:momentum} is valid in the regime when changes in the boundary conditions are sufficiently slow to not excite propagation of sound waves \cite{Chertkov2015}. Formally, the term $\partial_t \varphi$ in Eq. \eqref{eq:momentum} is much smaller than $\partial_x(\rho v^2) + \partial_x p$. The ratio of the pressure gradient term, $\partial_x p$ to the term $\partial_t \varphi$ is typically on the order of $0:0.001$ \cite{Osiadacz1984}. Hence, we omit the term $\partial_t \varphi$ from Eq. \eqref{eq:momentum}. Furthermore,  the flow velocities are much smaller than the speed of sound $a$; hence, the gas advection term $\partial_x(\rho v^2)$ is also omitted. In addition, we also assume that (i) the pipeline is level and (ii) the gas temperature, composition, and compressibility are uniform throughout the system.  These assumptions allow, respectively, the removal of the gravity term $\rho g \sin \theta$ and approximation of the equation of state by the ideal, constant, linear relationship $p = a^2 \rho$ between pressure and density. Throughout the rest of the article, we shall use density and pressure interchangeably because of the constant linear relationship between these quantities.  We make these assumptions here in order to simplify the formulation explored in this initial study of the class of transient pipeline estimation problems.  The assumptions can be relaxed, but we leave for future work the incorporation of effects caused by altitude, temperature, and composition changes, as well as use of more realistic equation of state relations that account for gas compressibility \cite{Lee66,Menon05}.  Using the aforementioned observations and assumptions, Eq. \eqref{eq:pde} can be rewritten as 
\begin{flalign}
& \partial_t \rho + \partial_x \varphi = 0 \quad \text{and} \quad a^2 \partial_x \rho = -\frac{\lambda}{2D} \frac{\varphi |\varphi|}{\rho}. & \label{eq:pde_1}
\end{flalign}

Unlike the system of equations in \eqref{eq:pde}, which is hyperbolic, the above approximated system in Eq. \eqref{eq:pde_1} is a parabolic system (an interested reader is referred to Appendix \ref{app:eig} for the eigenvalue equation of the system given in Eq. \eqref{eq:pde_1}).

Numerous studies have been done to validate and verify the use of equations \eqref{eq:pde_1}  to approximate equations \eqref{eq:pde}, with particular focus on the effect of omitting the term $\partial_t \varphi$ from Eq. \eqref{eq:momentum}.  It is well known that this term is necessary to accurately represent compressible gas flow in the regime of fast transients \cite{Thorley1987}, which could occur for example when a valve is opened or closed quickly so as to cause a pressure wave to propagate through a pipe.  In this paper, our focus is on the normal operating regime of slowly-varying flows in large-scale gas transmission pipelines.  To understand the mathematical representation of compressible gas flow in a pipe in the physical regime with slowly-varying boundary conditions, we refer the reader to the canonical study by Osiadacz \cite{Osiadacz1984}, in which the magnitude of all of the terms in \eqref{eq:pde} are compared empirically for several example numerical simulations.  That study concludes that when modeling flow in this regime, the effect on the solution of omitting the flux derivative term $\partial_t \varphi$ is negligible.  This has been confirmed in more recent studies as well.  An empirical study of the magnitude of all of the terms in \eqref{eq:pde} was done for the regimes of fast and slow transients \cite{Dyachenko2017}, which resulted in the same conclusions as those of Osiadacz.  The implication is that for models that represent flows that change on the time-scale of hours, rather than minutes, omitting the flux derivative term $\partial_t \varphi$ is acceptable. 

The negligible effect of omitting $\partial_t \varphi$ has been shown in other recent studies of optimal control of gas pipelines on a 24-hour time horizon.  An example optimal control problem was solved using a method that omitted this term \cite{Mak2016}, and in the same study the solution was compared to one for the same problem that was obtained using a method that included this term \cite{Zlotnik2015}, with each approach producing essentially the same solution.  While on the time-scale of minutes the change in boundary conditions in this example is slow, the change in the boundary conditions and the solution is quite significant on the time-scale of hours.  We encourage the reader to examine these previous studies.  The estimation problems examined in the present study are posed using data and modeling on a 24-hour time horizon, so arguably the simplified equations in \eqref{eq:pde_1} may be applied.

Throughout the rest of this article, the gas flow dynamics on a pipe are represented using Eq. \eqref{eq:pde_1}.  This equation has a unique solution when the initial conditions and boundary conditions consisting of one of $\rho(0,t) = \ubar{\rho}(t)$ or $\varphi(0,t) = \ubar{\varphi}(t)$ and one of $\rho(\ell,t) = \bar{\rho}(t)$ and $\varphi(\ell,t) = \bar{\varphi}(t)$ are specified for the pipe. To more conveniently represent the dynamics in Eq. \eqref{eq:pde_1}, and create a better numerically conditioned problem, we first apply the dimensional transformations
\begin{flalign} \label{eq:nondim}
&\hat{t}=\frac{t}{\ell_0/a}, \quad \hat{x}=\frac{x}{\ell_0}, \quad \hat{\rho}=\frac{\rho}{\rho_0}, \quad \hat{\varphi}=\frac{\varphi}{a\rho_0},&
\end{flalign}
where $\ell_0$ and $\rho_0$ are nominal length and density, to yield the non-dimensional gas dynamics on a pipeline,
\begin{flalign}
& \partial_t \rho + \partial_x \varphi = 0 \quad \text{and} \quad \partial_x \rho = -\frac{\lambda \ell_0}{2D} \frac{\varphi |\varphi|}{\rho}. & \label{eq:pde_nondim}
\end{flalign}
The hat symbol in the above non-dimensional equations have been omitted for readability. Alternately, we shall rewrite the Eq.  \eqref{eq:pde_nondim} as follows:
\begin{flalign}
& \partial_t \rho + \partial_x \varphi = 0 \quad \text{and} \quad \varphi + f(t, \rho, \partial_x \rho) = 0 & \label{eq:pde_nondim_f}
\end{flalign}
where, 
\begin{flalign}
& f(t, \rho, \partial_x \rho) = \operatorname{sgn}(\partial_x \rho) \sqrt{\left| - \rho\cdot\frac{2D}{\lambda \ell_0}\cdot\partial_x \rho\right|}. & \label{eq:f}
\end{flalign}
In Eq. \eqref{eq:f}, ``$\operatorname{sgn}(\cdot)$'' denotes the signum function.  In the context of general flows on networks, the function $f(\cdot,\cdot,\cdot)$ in Eqs. \eqref{eq:pde_nondim_f} and \eqref{eq:f} is also referred to as the ``dissipation function'' \cite{Vuffray2015, Zlotnik2016}.  

The friction caused by high pressure turbulent flow through each pipe causes the pressure in the pipeline to gradually decrease along the direction of flow. Gas compressors are used to boost line pressure to meet the minimum pressure requirement for delivery to customers.  We model compressor stations as controllable actuators that can be used to manipulate the state of the gas transmission system by, for example, modulating the compression ratio or discharge pressure at the station level. Because the size of the compressor station is small relative to the length of a pipeline, we represent compressor action as a multiplicative increase in the density at a point $x = c$ with conservation of flow \textit{i.e.}, $\rho(c^+, t) = \alpha(t) \cdot \rho(c^-,t)$ and $\varphi(c^+,t) = \varphi(c^-,t)$ where, $\alpha(t)$ denotes the time-dependent compression ratio between suction (intake) and discharge (outlet) pressure.  While this modeling approach represents compressors as point objects, it is straightforward to represent compressors as node-connecting elements, e.g., as a short pipe with a nodal compressor object at one end.

\subsection{Dynamics of gas flow for a network} \label{subsec:network_model}
A gas transmission pipeline network consists of pipes (edges) interconnected at junctions (nodes) where the gas flow can be compressed, withdrawn from, or injected into the system.  We model the gas pipeline network as a connected directed graph $\mathcal G = (\mathcal V, \mathcal E)$ where $\mathcal V$ and $\mathcal E$ represent the set of junctions and the set of pipelines connecting any pair of junctions, respectively. We use $(i,j) \in \mathcal E$ to denote the pipeline that connects the junctions $i, j\in \mathcal V$. Let $\rho_{ij}$ and $\varphi_{ij}$ denote the instantaneous density and mass flux, respectively, within the edge $(i,j) \in \mathcal E$ defined on the domain $[0, L_{ij}] \times [0,T]$.  Each pipe $(i,j)$ is characterized by its length $L_{ij}$, diameter $D_{ij}$, and friction factor $\lambda_{ij}$.  In addition, the cross-sectional area of each pipe is denoted by $A_{ij}$. Between any two junctions that are connected via a pipe, the mass flux and density evolve according to Eq. \eqref{eq:pde_nondim}.  Hence for each edge $(i,j)\in \mathcal E$, the evolution of $\rho_{ij}$ and $\varphi_{ij}$ is given by Eq. \eqref{eq:pde_nondim}, \emph{i.e.},
\begin{subequations}
\begin{flalign}
& \partial_t \rho_{ij} + \partial_x \varphi_{ij} = 0 \,\, \text{and} \,\, \partial_x \rho_{ij} = -\frac{\lambda_{ij} \ell_0}{2D_{ij}} \frac{\varphi_{ij} |\varphi_{ij}|}{\rho_{ij}} & \label{eq:pde_nondim_edge} \\
& \text{or, equivalently} \nonumber \\ \,\, & \partial_t \rho_{ij} + \partial_x \varphi_{ij} = 0 \,\, \text{and} \,\, \varphi_{ij} + f_{ij}(t, \rho_{ij}, \partial_x \rho_{ij}) = 0, & \label{eq:pde_nondim_edge_f}
\end{flalign}
\label{eq:pde_nondim_edge_both}
\end{subequations}
\noindent where the dissipation function for each edge $(i,j) \in \mathcal E$ is given by 
\begin{flalign}
& f_{ij}(t, u, w) = \operatorname{sgn}(w) \sqrt{\left| - u\cdot\frac{2D_{ij}}{\lambda_{ij} \ell_0}\cdot w \right|}. & \label{eq:fij}
\end{flalign}
We remark that the $\varphi_{ij}$ is directional and positive value for $\varphi_{ij}$ denotes positive flow direction. We use a directed graph in order to denote for each edge a positive flow direction, which leads to the identity $\varphi_{ij}(x_{ij},t)=-\varphi_{ji}(L_{ij}-x_{ij},t)$. In addition, every junction $i \in \mathcal V$ is associated with a time-dependent nodal density $\rho^N_i(t): [0,T] \to \mathbb R_+$. The set of controllers is denoted by $\mathcal C\subset \mathcal E\times\{+,-\}$, where $(i,j)\equiv(i,j,+)\in\mathcal C$ denotes a controller located at node $i\in\mathcal V$ that augments the density of gas flowing into edge $(i,j)\in\mathcal E$ in the direction $i \to j$, while $(j,i)\equiv(i,j,-)\in\mathcal C$ denotes a controller located at node $j\in\mathcal V$ that augments density into edge $(i,j)\in\mathcal E$ in the direction $j\to i$. Compression is then modeled as a multiplicative ratio $\ubar{\alpha}_{ij}:[0,T]\to\mathbb R_+$ for $(i,j, +)\in\mathcal C$ and $\bar{\alpha}_{ij}:[0,T]\to\mathbb R_+$ for $(i,j, -)\in\mathcal C$.

Let $\mathcal V_s \subset \mathcal V$ denote the set of supply junctions where gas enters the network. Let $s_j(t)$ be the time-varying supply density at the junction $j \in \mathcal V_s$. Mass flux withdrawals at the remaining junctions $j \in \mathcal V_d = \mathcal V \setminus \mathcal V_s$ are denoted by $d_j(t)$. For ease of exposition, we shall refer to the $\mathcal V_s$ and $\mathcal V_d$ as the set of ``slack'' and ``non-slack'' nodes, respectively. 

We shall now establish the nodal balance equations that characterize the boundary conditions for the dynamics in Eq. \eqref{eq:pde_nondim_edge_both}. To that end, we define the densities and flows at edge domain boundaries by
\begin{subequations}
\begin{flalign}
& \ubar{\rho}_{ij}(t) \triangleq \rho_{ij}(t, 0), \quad \bar{\rho}_{ij}(t) \triangleq \rho_{ij}(t, L_{ij}), & \label{eq:rhobar} \\
& \ubar{\varphi}_{ij}(t) \triangleq \varphi_{ij}(t, 0), \quad \bar{\varphi}_{ij}(t) \triangleq \varphi_{ij}(t, L_{ij}), & \label{eq:phibar}\\
& \text{and the nominal average edge flow as} \nonumber \\
& \Phi_{ij}(t) \triangleq \frac 12 (\ubar{\varphi}_{ij}(t) + \bar{\varphi}_{ij}(t))). &\label{eq:avg_phi}
\end{flalign}
\label{eq:bar_defn}
\end{subequations}
For ease of understanding, the above definitions are illustrated using a schematic in Fig. \ref{fig:schematic}, on a pipe joining two nodes $i$ and $j$.
\begin{figure}[htbp]
\centering
\includegraphics[scale=1]{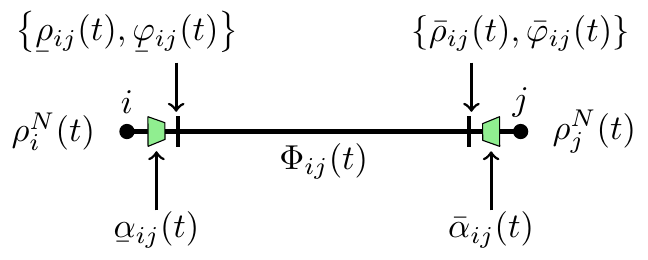}
\caption{The figure illustrates the densities and flows at the boundaries of each edge and the compression that can be applied at both the nodes $i$ and $j$.}
\label{fig:schematic}
\end{figure}
The nodal balance equations are specified in terms of the time-dependent compressor ratios $\ubar{\alpha}_{ij}(t)$ and $\bar{\alpha}_{ij}(t)$ , the gas withdrawals $d_j(t)$, and the supply densities $s_j(t)$ as 
\begin{subequations}
\begin{flalign}
& \ubar{\rho}_{ij}(t) = \ubar{\alpha}_{ij}(t)\rho_i^N(t), \, \forall \, (i,j) \in \mathcal E, & \label{eq:nodal_density_balance_1}\\
& \bar{\rho}_{ij}(t) = \bar{\alpha}_{ij}(t)\rho_i^N(t), \, \forall \, (i,j) \in \mathcal E, & \label{eq:nodal_density_balance_2}\\
& d_j(t) =\sum_{i\in\mathcal V_d}A_{ij} \bar{\varphi}_{ij}(t)- \sum_{k\in\mathcal V_d}A_{jk}\ubar{\varphi}_{jk}(t), \,\forall\, j\in\mathcal V_d, & \label{eq:flow_balance} \\
& \ubar{\rho}_{ij}(t)=s_i(t), \,\forall\, i\in\mathcal V_s. & \label{eq:slack_pressure}
\end{flalign}
\label{eq:nodal_balance}
\end{subequations}
In Eq. \eqref{eq:flow_balance}, $A_{ij}$ denotes the cross sectional area of the pipeline $(i,j) \in \mathcal E$. 

In this study, we assume that the estimation problem is solved for a well-instrumented pipeline system for which extensive measurements are available at custody transfer meters and compressor stations.  Compressor stations are in general large facilities that have sophisticated control systems and precise measurements of pressure, temperature, flow at suction (inlet, upstream) and discharge (outlet, downstream) headers.  Because we are assuming that these measurements are available and quite accurate, we suppose that measurements of the time-varying compressor ratio functions $\{\ubar{\alpha}_{ij}, \bar{\alpha}_{ij}\}_{(i,j)\in\mathcal C}$ and the transient withdrawals $\{d_j\}_{j\in\mathcal V_d}$ are available a priori. Optimal control problems aimed at computing these compression functions given transient withdrawals and algorithms to solve the same under transient conditions have been previously addressed in the literature \cite{Zlotnik2015}. For the purpose of estimation the compression functions are assumed to be known and possibly noisy or uncertain measurements of the withdrawal functions are assumed to be available. Furthermore, the time-varying pressure at the slack nodes are also assumed to be known, because the control policy for this quantity may be determined a priori. 

Next, we suppose that densities of gas in a pipe are upper and lower bounded according to the constraints
\begin{flalign}
& \rho_{ij}^{\min}\leqslant \rho_{ij}(t,x_{ij}) \leqslant \rho_{ij}^{\max}, \,\forall\, (i,j)\in\mathcal E. &\label{eq:box_density}
\end{flalign}
For the time horizon $T$, the value of interest is $T=24$ hours during which the significant transients of interest occur. During this time horizon, we require that the state variables $\varphi_{ij}$ and $\rho_{ij}$ are time-periodic in order for the dynamic constraints to be well-posed.  To achieve time-periodicity on the variables $\varphi_{ij}$ and $\rho_{ij}$, time-periodicity also has to be imposed on the control and parameter functions $\{\ubar{\alpha}_{ij}, \bar{\alpha}_{ij}\}_{(i,j)\in\mathcal C}$, $\{d_j\}_{j\in\mathcal V_d}$, and $\{s_j\}_{j\in\mathcal V_s}$ as given by Eqs. \eqref{eq:termcon1c}--\eqref{eq:termcon1e} (see \cite{Zlotnik2015}).
That is, we impose the terminal conditions 
\begin{subequations}
\begin{flalign}
&\rho_{ij}(0,x_{ij})=\rho_{ij}(T,x_{ij}),  \,\forall\, (i,j)\in\mathcal E, &\label{eq:termcon1a}\\
&\phi_{ij}(0,x_{ij})=\phi_{ij}(T,x_{ij}),  \,\forall\, (i,j)\in\mathcal E, &\label{eq:termcon1b}\\
&\ubar{\alpha}_{ij}(0)=\ubar{\alpha}_{ij}(T), \, \bar{\alpha}_{ij}(0)=\bar{\alpha}_{ij}(T),  \,\forall\, (i,j)\in\mathcal C, &\label{eq:termcon1c} \\
&d_j(0)= d_j(T), \,\forall\, j \in \mathcal V_d, &\label{eq:termcon1d} \\
&s_j(0)=s_j(T), \,\forall\, j \in \mathcal V_s. &\label{eq:termcon1e}
\end{flalign}
\label{eq:termcon}
\end{subequations}
On one hand, this is a reasonable assumption because operators of gas pipeline systems mandate the system to be restored to a nominal state at the end of every day, which in practice is relaxed to restoration of line-pack (the total mass of gas) within local subsystems of the pipeline.  However, the primary justification for solving a problem formulated with time-periodic boundary conditions is the need for well-posedness, both conceptually and computationally. Conceptually, without some specification of the initial and terminal conditions, these states could be produced by the solver in unpredictable ways.  However, there are no obvious criteria for what these endpoint states should be for optimal control and estimation problems for pipeline networks.  The requirement of time periodicity places the formulation within a well-understood structure so that the infinite-dimensional states of the underlying PDE-constrained problem are forced to be smooth, time-periodic manifolds.  Computationally, time-periodicity can be implemented by using a single vector to store the initial and terminal points in the discretization.  This ``circular'' time-discretization reduces the problem size and eliminates the need for additional constraints on the initial and terminal states.

Furthermore, for assimilation of data that is not time-periodic, the estimation problem can be applied to an extended time-horizon over which the data are interpolated to produce periodic inputs, and the solution can be taken as the restriction to the time-horizon of interest.  We focus here on modeling and basic formulations, and leave explicit treatment of estimation using non-periodic data to future work.


\subsection{Control system model} \label{subsec:cs_model}
In this section, we develop a reduced order model that represents the dynamics of gas flow through a network using a synthesis of Eqs. \eqref{eq:pde_nondim_edge_both}, \eqref{eq:nodal_balance}, \eqref{eq:box_density}, and \eqref{eq:termcon}. Specifically, we create a control system model using a lumped element approximation to characterize the dynamics for each edge in Eq. \eqref{eq:pde_nondim_edge_both}, together with Eq. \eqref{eq:nodal_balance} and Eq. \eqref{eq:box_density}, that uses nodal density $\rho_i^N$ for every $i \in \mathcal V$ as the state of the system. This reduction extends the previous modeling work  \cite{Grundel2013,Zlotnik2015,Zlotnik2016}. To that end, we shall first introduce a few definitions used in \cite{Zlotnik2016}.  
\begin{definition} \textit{Spatial Graph Refinement}: A refinement $\hat{\mathcal G} = (\hat{\mathcal V}, \hat{\mathcal E})$ of a directed graph $\mathcal G = (\mathcal V, \mathcal E)$ with a length $L_{ij}$ associated with each edge $(i,j) \in \mathcal E$ is constructed by adding extra nodes to subdivide the edges of $\mathcal E$ such that, the length of a new edge $(i,j) \in \hat{\mathcal E}$, $\hat L_{ij}$ satisfies 
\begin{flalign}
& \frac{\Delta L_{\mu(ij)}}{\Delta + L_{\mu(ij)}} < \hat L_{ij} < \Delta, & \label{eq:refinement} 
\end{flalign}
where, $\mu:\hat{\mathcal E} \to \mathcal E$ is a surjection from the refined edges to the parent edges and $\Delta$ denotes the maximum edge length in the refined graph produced by spatial discretization. 
\end{definition}
Throughout the rest of the article, we shall not show the explicit dependence of the density and flow variables on the independent variable $t$.
When refining the graph $\mathcal G$, we assume that $L = \Delta$ is small enough that the relative difference of the density and mass flux at the start and end of this refined edge is small \emph{i.e.}, 
\begin{flalign}
& \frac{\bar{\rho}_{ij} - \ubar{\rho}_{ij}}{\bar{\rho}_{ij} + \ubar{\rho}_{ij}} \ll 1, \quad\text{and}\quad \frac{\bar{\varphi}_{ij} - \ubar{\varphi}_{ij}}{\bar{\varphi}_{ij} + \ubar{\varphi}_{ij}} \ll 1 \, \forall \, (i,j) \in \hat{\mathcal E}. & \label{eq:lumping_condition}
\end{flalign}
The assumption indicates that $L$ is sufficiently small so that the relative density difference between the neighboring nodes is minor (as defined in Eq. \eqref{eq:lumping_condition}) at all times. The dynamics for each pipeline segment $(i,j) \in \hat{\mathcal E}$ in the refined graph $\hat{\mathcal G} = (\hat{\mathcal V}, \hat{\mathcal E})$ is again given by \eqref{eq:pde_nondim_edge_both}.
\begin{subequations}
\begin{flalign}
& \int_0^L(\partial_t\rho_{ij}+\partial_x\varphi_{ij}) \,dx=0, & \label{eq:mass_int} \\
& \int_0^L(\partial_x \rho_{ij})\,dx = -\frac{\lambda_{ij}\ell_0}{2D_{ij}}\int_0^L \frac{\varphi_{ij} |\varphi_{ij}|}{\rho_{ij}}\,dx. & \label{eq:momentum_int}
\end{flalign}
\label{eq:pde_int}
\end{subequations}
The above integrals of $\partial_t$, $\partial_x$, and nonlinear terms are evaluated using the trapezoid rule, the fundamental theorem of calculus, and averaging variables, respectively. This approximation yields 
\begin{subequations}
\begin{flalign}
& \frac{L}{2}(\dot{\ubar{\rho}}_{ij}+\dot{\bar{\rho}}_{ij}) = \ubar{\varphi}_{ij}-\bar{\varphi}_{ij}, & \label{eq:mass_disc}\\
& \ubar{\rho}_{ij}-\bar{\rho}_{ij} = -\frac{\lambda_{ij}\ell_0 L}{4D_{ij}} \frac{(\ubar{\varphi}_{ij}+\bar{\varphi}_{ij}) |\ubar{\varphi}_{ij}+\bar{\varphi}_{ij}|}{\ubar{\rho}_{ij}+\bar{\rho}_{ij}}. & \label{eq:momentum_disc}
\end{flalign}
\label{eq:ode_disc}
\end{subequations}
We remark that Eq. \eqref{eq:momentum_disc} can equivalently be written using the dissipation function as 
\begin{flalign}
& \Phi_{ij} + f_{\mu(ij)}\left(t, \frac{\ubar{\rho}_{ij} + \bar{\rho}_{ij}}{2}, \frac{\ubar{\rho}_{ij} - \bar{\rho}_{ij}}L\right) = 0 & \label{eq:momentum_f}
\end{flalign}
where $\Phi_{ij} = \frac 12 (\ubar{\varphi}_{ij} + \bar{\varphi}_{ij})$.

The resulting equations \eqref{eq:ode_disc} and nodal balance equations \eqref{eq:nodal_balance} then reduce to a differential-algebraic equation (DAE) system:
\begin{subequations}
\begin{flalign}
& \frac{L}{2}(\dot{\ubar{\rho}}_{ij}+\dot{\bar{\rho}}_{ij}) = \ubar{\varphi}_{ij}-\bar{\varphi}_{ij},\, \forall \, (i,j) \in \hat{\mathcal E}& \label{eq:dae0c}\\
& \ubar{\rho}_{ij}-\bar{\rho}_{ij} = -\frac{\lambda_{ij}\ell_0 L}{4D_{ij}} \frac{\Phi_{ij}|\Phi_{ij}|}{(\ubar{\rho}_{ij}+\bar{\rho}_{ij})},  \, \forall \, (i,j) \in \hat{\mathcal E} & \label{eq:dae0d} \\
& \ubar{\rho}_{ij} = \ubar{\alpha}_{ij}\rho_i^N, \,\bar{\rho}_{ij} = \bar{\alpha}_{ij}\rho_i^N, \, \forall \, (i,j) \in \hat{\mathcal E}, &\label{eq:dae0a}\\
& d_j = \sum_{i\in\hat{\mathcal V}_d}A_{ij} \bar{\varphi}_{ij} - \sum_{k\in\hat{\mathcal V}_d}A_{jk}\ubar{\varphi}_{jk}, \, \forall \, j\in\hat{\mathcal V}_d,  \label{eq:dae0b} & \\
& \ubar{\rho}_{ij} = s_i, \,\forall \, i\in\hat{\mathcal V}_s. & \label{eq:dae0e}
\end{flalign}
\label{eq:dae}
\end{subequations}
Eq. \eqref{eq:dae0a} represents continuity of density at junctions with jumps in the case of compression or regulation, Eq. \eqref{eq:dae0b} represents flow balance at junctions, and Eqs. \eqref{eq:dae0c}-\eqref{eq:dae0d} represent flow dynamics on each segment.

The DAE system in Eq. \eqref{eq:dae} can be equivalently represented by a system of nonlinear DAEs in matrix-vector form using graph theoretic notation. We shall first introduce the additional graph-theoretic notation and state the resulting DAE system, while providing the derivation in an Appendix. The set of nodes in the set $\hat{\mathcal V}$ is first enumerated according to a fixed ordering. For ease of exposition, we choose an ordering where the non-slack nodes, $\hat{\mathcal V}_d$, are ordered after the slack nodes, $\hat{\mathcal V}_s$. Now each node in $\hat{\mathcal V}$ is assigned an index $[\hat{\mathcal V}] := \{1,\dots, |\hat{\mathcal V}|\}$ according to the chosen ordering. Each edge is also assigned an index in $[\hat{\mathcal E}] := \{1, \dots, |\hat{\mathcal E}|\}$ and we define the map $\pi_e:\hat{\mathcal E} \to [\hat{\mathcal E}]$, which maps each edge to this ordering. Throughout the rest of the article, boldface notation will be used to represent vectors.

Let $\bm \rho^N = (\rho_1^N, \rho_2^N, \dots, \rho_{|\hat{\mathcal V}|}^N)^{\intercal}$ denote the nodal density state vector. 
Equation \eqref{eq:dae0a} will be used to state \eqref{eq:dae0c}-\eqref{eq:dae0d} in terms of nodal densities $\bm \rho^N$.  We then define state vectors $\ubar{\bm \varphi}=(\ubar{\varphi}_1,\ldots,\ubar{\varphi}_{|\hat{\mathcal E}|})^\intercal$ and $\bar{\bm \varphi}=(\bar{\varphi}_1,\ldots,\bar{\varphi}_{|\hat{\mathcal E}|})^\intercal$, where $\ubar{\varphi}_k$ and $\bar{\varphi}_k$ are indexed by $k=\pi_e(ij)$. Furthermore, we let $\bm \Phi = \tfrac 12 (\ubar{\bm \varphi} + \bar{\bm \varphi})$ denote the vector of average flow in each pipeline segment.

We now define the incidence matrix of the full refined graph $\hat{\mathcal G}$, acting
$A:\mathbb R^{|\hat{\mathcal E}|}\to\mathbb R^{|\hat{\mathcal V}|}$, by
\begin{flalign} \label{eq:incidence0}
&A_{ik} = \left\{ \begin{array}{ll}  1 & \text{edge $k=\pi_e(ij)$ enters node $i$,} \\ -1 & \text{edge $k=\pi_e(ij)$ leaves node $i$,} \\ 0 & \text{else} \end{array}\right. &
\end{flalign}
We then define the time-dependent weighted incidence matrix $B:\mathbb R^{|\hat{\mathcal E}|}\to\mathbb R^{|\hat{\mathcal V}|}$ by
\begin{flalign} \label{eq:incidence0_w}
&B_{ik} = \left\{ \begin{array}{ll}  \bar{\alpha}_{ij} & \text{edge $k=\pi_e(ij)$ enters node $i$,} \\ -\ubar{\alpha}_{ij} & \text{edge $k=\pi_e(ij)$ leaves node $i$,} \\ 0 & \text{else}. \end{array}\right. &
\end{flalign}
where $\operatorname{sign}(B)=A$.  In the reduced order model, the compressor control inputs are embedded within the matrix $B$. We define the vector of withdrawal fluxes $\bm d=(d_1,\ldots,d_M)^T$ with $M=|\hat{\mathcal V}_d|$, where $d_k$ is negative if an injection. Also define the slack node densities as $\bm s=(s_1,\ldots,s_b)^{\intercal}=\{\rho^N_j\}_{j\in\hat{\mathcal V}_s}$, where $b=|\hat{\mathcal V}_s|$, and non-slack (demand) node densities as $\bm \rho=(\rho_1,\ldots,\rho_M)^T=\{\rho^N_j\}_{j\in\hat{\mathcal V}_d}$, so that $b+M=|\hat{\mathcal V}|$. We remark that $\bm s$, $\bm \rho$ and $\bm \rho^N$ are related as $\bm \rho^N = (\bm s, \bm \rho)^\intercal$, because of the chosen ordering of nodes in $\hat{\mathcal V}$. Then let $A_s,B_s\in\mathbb R^{b\times |\hat{\mathcal E}|}$ denote the sub-matrices of rows of $A$ and $B$ corresponding to $\hat{\mathcal V}_s$, and let $A_d,B_d\in\mathbb R^{M\times |\hat{\mathcal E}|}$ correspond similarly to $\hat{\mathcal V}_d$.  
Also, define the diagonal matrices $\Lambda,K, X\in\mathbb R^{|\hat{\mathcal E}|\times |\hat{\mathcal E}|}$ by $\Lambda_{kk}=L_k$,  $K_{kk}=\ell_0\lambda_k/D_k$, and $X_{kk} = A_k$ where $L_k$, $\lambda_k$, $D_k$, and $A_k$ are the non-dimensional length, friction factor, diameter, and cross-sectional area of edge $k=\pi_e(ij)$. Then \eqref{eq:dae} can be rewritten (see Appendix \ref{app:dae} for proof) as
\begin{subequations}
\begin{flalign}
& |A_d| X \Lambda |B_d^\intercal|\dot{\bm \rho} = 4(A_d X \bm \Phi - \bm d) - |A_d| X \Lambda |B_s^\intercal| \dot{\bm s}  & \label{eq:dae1a} \\
& \Lambda K \bm \Phi \odot \bm \Phi =  -B^\intercal \bm \rho^N \odot |B^\intercal| \bm \rho^N & \label{eq:dae1b} 
\end{flalign}
\label{eq:dae_final}
\end{subequations}
where the operator $\odot$ represents the Hadamard product. Here, the gas withdrawals are $\bm d \in \mathbb R^M$, input densities are $\bm s \in \mathbb R_+^b$ and the compression ratios $\ubar{\alpha}_{ij}, \bar{\alpha}_{ij} \in \mathcal C$ are time-varying and $\bm \rho \in \mathbb R_+^M$ and $\bm \Phi \in \mathbb R^{|\hat{\mathcal E}|}$ denote the states of the system. The system of equations in \eqref{eq:dae_final} is a DAE system. This DAE system can be converted to a system of ODEs with nodal densities $\bm \rho$ as the only set of state variables by expressing each term in $\bm \Phi$ in terms of $\bm \rho$ using Eq. \eqref{eq:dae1b} and substituting them in Eq. \eqref{eq:dae1a}. The vector $\bm \Phi$ expressed in terms of $\bm \rho$ is given by
\begin{flalign}
& \bm \Phi = -  \left| - (\Lambda K)^{-1} B^\intercal \bm \rho^N \odot |B^\intercal| \bm \rho^N \right|^{\frac 12} \odot \operatorname{sgn}(B^\intercal \bm \rho^N) & \label{eq:phi_vec_f} 
\end{flalign}
where the signum function is being applied component-wise to the vector $B^\intercal \bm \rho^N$. Alternatively, the $k^{\text{th}}$ component of $\bm \Phi$ in Eq. \eqref{eq:phi_vec_f} can be expressed in terms of the dissipation function in Eq. \eqref{eq:fij} as follows:
\begin{flalign}
& \Phi_k = - f_{\mu(ij)}\left(t, \frac 12 \left( |B^\intercal| \bm \rho^N \right)_k, \left( \Lambda^{-1} B^\intercal \bm \rho^N\right)_k\right), & \label{eq:phi_f}
\end{flalign}
where, $k = \pi_e(ij)$ and $(i,j) \in \hat{\mathcal E}$. In the above equation, $\left(\cdot\right)_k$ denotes the $k^{\text{th}}$ component of the vector. For ease of exposition, we will use $\bm \Phi + \bm f(\cdot,\cdot,\cdot) = 0$ to equivalently represent \eqref{eq:phi_vec_f}. Substituting for $\bm \Phi$ in Eq. \eqref{eq:dae1a} using Eq. \eqref{eq:phi_vec_f}, we obtain a set of nonlinear ODEs that represent flow of gas through the network in terms of purely the nodal density dynamics as
\begin{flalign}
& |A_d| X \Lambda \left( |B_d^\intercal|\dot{\bm \rho} + |B_s^\intercal| \dot{\bm s} \right) + 4(A_d X \bm f(\cdot,\cdot,\cdot) + \bm d) = \bm 0.  & \label{eq:ode_nodal}
\end{flalign}

\subsection{Uncertainty modeling} \label{subsec:noise}
Given the nodal density dynamics in Eq. \eqref{eq:ode_nodal}, we incorporate an additive noise process $\bm \eta$ as given below:
\begin{flalign}
& |A_d| X \Lambda \left( |B_d^\intercal|\dot{\tilde{\bm \rho}} + |B_s^\intercal| \dot{\bm s} \right) + 4(A_d X \bm f(\cdot,\cdot,\cdot) + \tilde{\bm d}) + \bm \eta = \bm 0.  & \label{eq:ode_nodal_noise}
\end{flalign}
In the above equation, $\bm \eta$ is time varying and has a dependence on time that has not been made explicit for sake of readability.  We use the noise process $\bm \eta$ to simultaneously account for errors caused by (i) simplification of physical modeling, (ii) uncertainty in model parameters, and (iii) process and measurement noise.  All these uncertainties are lumped together into one additive noise process $\bm \eta$.  Specifically, the value $\tilde{\bm \rho}$ represents the solution to the stochastic DAE in \eqref{eq:ode_nodal_noise} given stochastic withdrawals $\tilde{\bm d}$.  For the estimation problems that we will treat, we assume that  $\tilde{\bm d}$ and $\tilde{\bm \rho}$ are available, and we interpret them as noisy measurements of $\bm d$ and $\bm \rho$, which in turn satisfy the deterministic (noiseless) model \eqref{eq:ode_nodal}.    In addition, after time-discretization of the system \eqref{eq:ode_nodal_noise} the process $\bm \eta$ can be interpreted to incorporate error caused by coarse sampling in time.  
Throughout the rest of the article, we shall, without loss of generality, refer to $\bm \eta$ as the measurement noise.  We do not make any other assumption on $\bm \eta$.  In Section \ref{sec:state_param}, we formulate least squares problems for weighted $L_2$ minimization of the measurement and process errors over a time horizon $T$, i.e., $\int_0^T(\bm d-\tilde{\bm d})^{\intercal} W_1(\bm d-\tilde{\bm d}) \mathrm{d}t$ and $\int_0^T(\bm \rho - \tilde{\bm \rho})^\intercal W_2(\bm \rho - \tilde{\bm \rho}) \mathrm{d}t$, with respective weighting matrices $W_1$ and $W_2$, subject to the deterministic dynamic constraints \eqref{eq:ode_nodal}.  
Our purpose is to develop an applied technique for state and parameter estimation for pipeline system models of the form \eqref{eq:ode_nodal}, so we do not attempt to analyze the characteristics of $\bm \eta$ here. Rather, we will use empirical studies to characterize performance of the developed estimation approach.

\section{Assumptions and Monotonicity properties of the Nodal Dynamics} \label{sec:assumptions}
In this section, we state the assumptions made on the nonlinear system of ODE in Eq. \eqref{eq:ode_nodal} that represent the nodal dynamics and derive monotonicity properties for the same. 
\subsection{Assumptions on the nodal dynamics} \label{subsec:assumptions}
We formulate formal assumptions for the nodal dynamics in Eq. \eqref{eq:ode_nodal} that are necessary to establish the monotonicity property required for the proposition in the next section \cite{Misra2016}.  
\begin{enumerate}
\item \label{assump:3} Differentiability of inputs and controls: The compressor functions $\ubar{\alpha}_{ij}(t)$ and $\bar{\alpha}_{ij}(t)$, the gas withdrawal profiles $d_j(t)$, and the input densities $s_j(t)$ at the slack nodes are all $C^k([0,T])$ functions for $k\geq 2$. 
\item \label{assump:1} Well-posedness of the initial value problem: The initial value problem for the Eq. \eqref{eq:ode_nodal} along with the initial conditions $\bm \rho(0) = \bm \rho^0$ for given time-varying twice differentiable withdrawal profiles, input densities, and compression ratios has a unique solution that is twice differentiable. 
\item \label{assump:2} Existence of solution to the boundary value problem with time-periodic boundary conditions: A solution exists for the boundary value problem on the system of ODEs given by Eq. \eqref{eq:ode_nodal} with time-periodic boundary conditions $\bm \rho(0) = \bm \rho(T)$, given $\{\bar{\alpha}_{ij}, \ubar{\alpha}_{ij}\}_{(i,j) \in \mathcal C}$, $\bm d$ and $\bm s$ such that 
\begin{subequations}
\begin{flalign}
& \ubar{\alpha}_{ij}(0) = \ubar{\alpha}_{ij}(T), \, \forall \, (i,j) \in \mathcal C, & \label{eq:c_tp1} \\
& \bar{\alpha}_{ij}(0) = \bar{\alpha}_{ij}(T)\, \forall \, (i,j) \in \mathcal C & \label{eq:c_tp2} \\
& \bm d(0) = \bm d(T), \,\text{and}\, \bm s(0) = \bm s(T). & \label{eq:ds_tp}
\end{flalign}
\label{eq:tp_parameters}
\end{subequations}
\end{enumerate}
The above assumptions are imposed in order to guarantee that the flow solution defined by the original PDE system \eqref{eq:pde_1}, with consistent boundary conditions specified by the nodal balance conditions \eqref{eq:nodal_balance} on each pipe, admits a unique classical solution that is both mathematically well-defined and physically realizable.  In this case, it is straightforward to show the discretized system \eqref{eq:ode_nodal} is a consistent approximation of the PDE system \cite{Zlotnik2015}. We suppose that in the limit as the discretization step $\Delta$ approaches zero, the solution to the nodal dynamics \eqref{eq:ode_nodal} will approach the solution to the full PDE dynamics pointwise, although a rigorous proof of this result is outside the scope of the present study.  The above assumptions are not restrictive, in the sense that they are reasonable for the physical system that is being studied.  The dynamics of gas flows in the regime of slowly-varying transients that do not exhibit waves or shocks will have a twice-differentiable solution to an initial boundary value problem when the parameter functions are twice-differentiable (or can be closely approximated by twice-differentiable functions), and when the initial conditions would not induce such shocks after time $t=0$ (flows and pressures are balanced at nodes). With the above assumptions of well-posedness of the initial value problem and existence of a solution given twice-differentiable inputs, we proceed to establish a uniqueness result for the state and parameter estimation solution.

\subsection{Monotonicity of nodal dynamics} \label{subsec:monotonicity}
We first introduce a few definitions before presenting the monotonicity properties of the nodal dynamics in Eq. \eqref{eq:ode_nodal}. 
\begin{definition} \label{def:monotone_cs}
\textit{Monotone-parametrized control system \cite{Zlotnik2016}:} Let 
\begin{flalign}
\dot{\bm x} = g(\bm x, \bm u, \bm p), \qquad \bm x (0) = \bm y \label{eq:cs}
\end{flalign}
where, $\bm x(t) \in \mathbb R^n$ is the state vector, $\bm u \in \mathbb R^m$ is the control vector, and $p(t) \in \mathbb R^p$ is a parameter vector. Furthermore, $g$ is Lipschitz. The control system \eqref{eq:cs} is monotone-parameterized with respect to the parameter vector $\bm p(t)$ if, for all $i \geqslant 0$, $\bm y_1, \bm y_2 \in \mathbb R^n$, $\bm u(t): (0,\infty) \rightarrow \mathbb R^m$, and piecewise-continuous functions $\bm p_1(t), \bm p_2(t): (0,\infty) \rightarrow \mathbb R^p$, the orderings $\bm y_1 \leqslant \bm y_2$ and $\bm p_1(s) \leqslant \bm p_2(s)$ for all $s \in [0,t]$ imply that $\bm x_1(t) \leqslant  \bm x_2(t)$. Here, the inequalities for vectors are mean componentwise and $\bm x_1(t)$, $\bm x_2(t)$ are the solutions to \eqref{eq:cs} with initial condition $\bm y_1$, $\bm y_2$, control input $\bm u(t)$, and parameter vectors $\bm p_1(t)$, $\bm p_2(t)$, respectively. 
\end{definition}
Using the above definition, we subsequently show that the nodal dynamics in Eq. \eqref{eq:ode_nodal} is mononote-parameterized with respect to the withdrawals, \textit{i.e.}, the nodal density/pressure can only increase with decreasing withdrawals. In Eq. \eqref{eq:ode_nodal}, the withdrawals are time-varying parameters. Mathematical conditions for which systems representing actuated flows in dissipative flow networks are monotonic have been derived \cite{Zlotnik2016,Misra2016}. We re-state the conditions without proof.

\begin{proposition} \label{prop:monotonicity}
(see \cite{Zlotnik2016}, Proposition 2) For a system representing actuated flows in dissipative flow networks with positive and differentiable compression functions, the nodal flow dynamics obtained via spatial discretization are monotone parameterized with respect to nodal withdrawals if the dissipation function $f_{ij}(t, u, w)$ is differentiable and increasing in its last argument for all $(i,j) \in \mathcal E$ \emph{i.e.}, 
\begin{flalign}
    \frac{\partial}{\partial w} f_{ij}(t, u, w) > 0, \quad (i,j) \in \mathcal E. \label{eq:monotone_condition}
\end{flalign}
\end{proposition}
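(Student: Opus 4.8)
The plan is to read Eq.~\eqref{eq:ode_nodal} as an implicit ODE and, after isolating the time derivative, cast it as an explicit control system $\dot{\bm\rho}=g(t,\bm\rho,\bm d)$ with the withdrawals $\bm d$ playing the role of the parameter vector $\bm p$ in Definition~\ref{def:monotone_cs}. Writing $M:=|A_d|X\Lambda|B_d^\intercal|$ for the mass matrix in \eqref{eq:dae1a}, this is
\[
\dot{\bm\rho}=g(t,\bm\rho,\bm d)=-M^{-1}\Bigl(|A_d|X\Lambda|B_s^\intercal|\dot{\bm s}+4A_dX\,\bm f(\cdot,\cdot,\cdot)+4\bm d\Bigr),
\]
where $\bm f$ depends on $\bm\rho$ through $\bm\rho^N=(\bm s,\bm\rho)^\intercal$ via \eqref{eq:phi_f}. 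Since $g$ is Lipschitz under Assumption~\ref{assump:1}, the natural route is the comparison (Kamke--M\"uller) theorem for cooperative systems, which reduces the claim to two infinitesimal sign conditions: \emph{quasimonotonicity} of $g$ in the state, $\partial g_a/\partial\rho_b\geqslant 0$ for $a\neq b$, and the correct sign of $\partial g/\partial\bm d$, namely that each nodal density is nonincreasing in each withdrawal (so density rises as withdrawals fall, as asserted).

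The part that genuinely uses the hypothesis \eqref{eq:monotone_condition} is the Jacobian of the flow term. Differentiating \eqref{eq:phi_f}, each edge flow $\Phi_k=-f_{\mu(ij)}(t,u_k,w_k)$ depends on $\bm\rho$ only through the averaged density $u_k=\tfrac12(|B^\intercal|\bm\rho^N)_k$ and the discretized gradient $w_k=(\Lambda^{-1}B^\intercal\bm\rho^N)_k$, with $\partial w_k/\partial\rho_b=B_{bk}/\Lambda_{kk}$ carrying exactly the compressor-weighted signed incidence of edge $k$ at node $b$. I would first confirm the hypothesis is nonvacuous for the present dissipation function: writing $f_{ij}(t,u,w)=\operatorname{sgn}(w)\sqrt{c_{ij}\,u\,|w|}$ with $c_{ij}=2D_{ij}/(\lambda_{ij}\ell_0)>0$ and $u=\rho>0$, a direct computation gives
\[
\frac{\partial}{\partial w}f_{ij}(t,u,w)=\tfrac12\sqrt{c_{ij}\,u/|w|}>0 \quad (w\neq 0).
\]
The lumping condition \eqref{eq:lumping_condition} makes $|w_k|$ small relative to $u_k$, so this $w$-derivative dominates the $u$-derivative and pins the sign of the raw flow Jacobian $\partial\bm\Phi/\partial\bm\rho$; together with the conservation term $A_dX\bm\Phi$ this is precisely the cooperative coupling that would give the state condition.

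The obstacle—and it is the crux, not a routine step—is the non-diagonal mass matrix $M$. The lumped (trapezoid) discretization attributes each pipe's line-pack to the densities at \emph{both} of its endpoints, so $M$ couples the time derivatives across adjacent nodes and cannot be read off term-by-term. Moreover $M=|A_d|X\Lambda|B_d^\intercal|$ is entrywise nonnegative but, by the classical characterization, a nonnegative matrix admits a nonnegative inverse only when it is monomial; since any node of degree $\geqslant 2$ destroys that structure, $M^{-1}$ necessarily has negative off-diagonal entries. Consequently neither the state quasimonotonicity nor the parameter sign $\partial g/\partial\bm d\leqslant 0$ survives a naive reading of the explicit ODE, and a generic cooperative-systems argument does not close. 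The resolution I would pursue is to avoid inverting $M$ and instead argue at the level of the conservation law coupled to the monotone edge-flow relation \eqref{eq:dae1a}--\eqref{eq:phi_f}, invoking the specialized monotone-flow-network comparison result of \cite{Zlotnik2016,Misra2016}, which is engineered exactly for ``conservation plus monotone dissipation'' systems and does not require inverse-nonnegativity of the capacitance operator. Within that framework the single sign input is $\partial_w f_{ij}>0$, so condition \eqref{eq:monotone_condition} is precisely what is needed, and the ordering of solutions then follows from the ordering of initial data and withdrawals. Reconciling the endpoint-shared line-pack coupling with this monotone-network machinery is the step I expect to consume essentially all of the effort.
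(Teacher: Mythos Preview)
The paper does not prove this proposition at all: immediately before the statement it says ``We re-state the conditions without proof,'' and the proposition is attributed to \cite{Zlotnik2016}. So there is no proof in the paper to compare against. What the paper \emph{does} prove is Corollary~\ref{cor:monotonicity}, which applies Proposition~\ref{prop:monotonicity} to the specific gas-network dynamics \eqref{eq:ode_nodal}. Most of what you wrote---inverting the mass matrix $M=|A_d|X\Lambda|B_d^\intercal|$ to obtain an explicit ODE, and verifying $\partial_w f_{ij}>0$ for the particular dissipation function \eqref{eq:fij}---is precisely the content of the paper's proof of that corollary, not of the proposition.

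There is also a circularity in your proposal. You correctly observe that a naive Kamke--M\"uller argument stalls on the non-diagonal mass matrix, since $M^{-1}$ is not entrywise nonnegative. Your proposed resolution is to ``invoke the specialized monotone-flow-network comparison result of \cite{Zlotnik2016,Misra2016}.'' But that result \emph{is} Proposition~\ref{prop:monotonicity}. If the task is to prove the proposition itself, you cannot cite it; if the task is to apply it to this system (as the paper does in Corollary~\ref{cor:monotonicity}), then the mass-matrix obstruction is precisely what the black-box proposition is there to absorb, and the paper simply notes that the system has the required form and checks \eqref{eq:monotone_condition}. Either way, the substantive work you flag---reconciling the endpoint-shared line-pack coupling with a comparison principle---lives entirely inside the cited references and is not reproduced here.
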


We shall now show that the nodal dynamics given by Eq. \eqref{eq:ode_nodal} is also monotone with respect to the nodal injections. We will first need the following lemma:

\begin{lemma} \label{lem1:monotonicity}
Consider a connected graph $\hat{\mathcal G}=(\mathcal{V},\mathcal{E})$ with more than two nodes that has incidence matrix $A$. Let $0\leqslant k<|\mathcal{V}|-2$. A matrix $A_d$ created by removing any $k+1$ rows from $A$ leaves $|\hat{\mathcal V}|-1-k$ linearly independent rows.
\end{lemma}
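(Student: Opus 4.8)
The plan is to characterize linear independence of the rows of $A$ through its left null space $\ker A^\intercal$, and to exploit the hypothesis that $\hat{\mathcal G}$ is connected. First I would recall the structural fact that every column of the incidence matrix $A$ contains exactly one entry equal to $+1$ and one equal to $-1$ (the head and tail of the associated edge), with all other entries zero. Consequently, a vector $\bm z \in \mathbb R^{|\mathcal V|}$ satisfies $\bm z^\intercal A = \bm 0$ if and only if $z_i = z_j$ for every edge $(i,j)\in\mathcal E$; that is, $\bm z$ must be constant along every edge. Because $\hat{\mathcal G}$ is connected, this forces $\bm z$ to be constant over all of $\mathcal V$, so $\ker A^\intercal = \operatorname{span}(\bm 1)$, where $\bm 1$ denotes the all-ones vector. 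This is the key step, and the only place where connectedness enters.

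Next I would translate the claim into this language. Writing $n = |\mathcal V|$, let $\mathcal S \subseteq \{1,\dots,n\}$ denote the set of $n-1-k$ row indices retained in $A_d$, so its complement (the removed indices) has cardinality $k+1 \geqslant 1$. The retained rows $\{A_i\}_{i\in\mathcal S}$ are linearly dependent precisely when there exist coefficients $\{c_i\}_{i\in\mathcal S}$, not all zero, with $\sum_{i\in\mathcal S} c_i A_i = \bm 0$. Equivalently, defining $\bm z \in \mathbb R^n$ by $z_i = c_i$ for $i\in\mathcal S$ and $z_i = 0$ otherwise, this says $\bm z^\intercal A = \bm 0$ with $\bm z \neq \bm 0$ and $\bm z$ supported on $\mathcal S$.

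Finally I would invoke the first step: any such $\bm z$ lies in $\ker A^\intercal = \operatorname{span}(\bm 1)$, hence $\bm z = \gamma\bm 1$ for some scalar $\gamma$. But $\bm z$ vanishes on the nonempty removed index set (since $k+1\geqslant 1$), so $\gamma = 0$ and therefore $\bm z = \bm 0$, a contradiction. Thus no nontrivial dependence exists, and the $n-1-k = |\hat{\mathcal V}|-1-k$ retained rows are linearly independent.

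I do not expect a genuine obstacle here: the argument is elementary once the connectivity-driven identification $\ker A^\intercal = \operatorname{span}(\bm 1)$ is in hand. The role of the bound $k < |\mathcal V|-2$ is merely to guarantee that the retained set is nonempty (indeed contains at least two rows); the same reasoning in fact shows that \emph{every} proper subset of rows of the incidence matrix of a connected graph is linearly independent. A spanning-tree argument, restricting $A$ to the $n-1$ edges of a spanning tree and ordering nodes by a leaf-elimination order to expose a triangular structure, would furnish an equivalent, more constructive alternative if a self-contained rank computation were preferred.
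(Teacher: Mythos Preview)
Your argument is correct and, at its core, rests on the same structural fact the paper uses: for a connected graph the incidence matrix has rank $|\mathcal V|-1$, equivalently $\ker A^\intercal=\operatorname{span}(\bm 1)$. The execution differs, however. The paper cites the rank identity from a reference, then argues that no two rows can be scalar multiples (since $a_k=-a_j$ would isolate a two-node component), and finally asserts that removing one row leaves $|\mathcal V|-1$ independent rows, after which removing $k$ more is trivial. Your route is more direct and more rigorous: by explicitly identifying $\ker A^\intercal=\operatorname{span}(\bm 1)$ you immediately see that any nontrivial left dependence must be supported on \emph{all} indices, so deleting even one row kills every dependence. This cleanly bypasses the paper's pairwise-uniqueness step, which on its own does not actually imply joint independence of $|\mathcal V|-1$ rows; your support argument fills precisely that gap. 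Your closing remark that every proper subset of rows is independent is the sharp statement, and the spanning-tree alternative you mention would also work.
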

\begin{proof}
Let $r(Q)$ denote the rank of a matrix $Q$.  For a connected graph with more than 2 nodes, the incidence matrix $A$ satisfies $r(A)=|\hat{\mathcal V}|-1$ (\cite{Bapat2014}, Lemma 2.2).  By definition, each column of $A$ has two non-zero entries, which are $1$ and $-1$.  For two rows $a_j$ and $a_k$ of $A$ to be linearly dependent, they must satisfy $a_k=-a_j$, which means that two nodes of the graph are connected by one or more edges, but not connected to any other node in the graph, contradicting the assumption of connectedness.  It follows that all rows of $A$ are unique.  Because $A$ has $|\hat{\mathcal V}|$ rows, removing one row leaves $|\hat{\mathcal V}|-1$ linearly independent rows. Furthermore, removing $k$ additional rows leaves $|\hat{\mathcal V}|-1-k$ independent rows. 
\end{proof}

\begin{corollary} \label{cor:monotonicity}
Given control profiles $\ubar{\alpha}_{ij}, \bar{\alpha}_{ij} \in C_+^k([0,T])$, the nodal dynamics in Eq. \eqref{eq:ode_nodal} are monotone parameterized with respect to the nodal withdrawals, $\bm d$.
\end{corollary}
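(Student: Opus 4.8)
The plan is to reduce the claim to a direct application of Proposition \ref{prop:monotonicity}, whose single nontrivial hypothesis is that each dissipation function $f_{ij}(t,u,w)$ be differentiable and strictly increasing in its last argument $w$. Positivity and differentiability of the compression functions is already supplied by the hypothesis $\ubar{\alpha}_{ij}, \bar{\alpha}_{ij} \in C_+^k([0,T])$, so the only substantive thing left to verify is the monotonicity of $f_{ij}$ in $w$ for the explicit form in Eq. \eqref{eq:fij}.

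First I would confirm that Eq. \eqref{eq:ode_nodal} can be cast in the standard control-system form $\dot{\bm \rho} = g(\bm \rho, \bm u, \bm d)$ of Definition \ref{def:monotone_cs}, which Proposition \ref{prop:monotonicity} presupposes. This requires the coefficient matrix $|A_d| X \Lambda |B_d^\intercal|$ multiplying $\dot{\bm \rho}$ to be invertible, and this is precisely where Lemma \ref{lem1:monotonicity} enters: removing the $b = |\hat{\mathcal V}_s|$ slack-node rows from the incidence matrix $A$ of the connected refined graph $\hat{\mathcal G}$ leaves $A_d$ with full row rank $M = |\hat{\mathcal V}_d|$. Combined with the strict positivity of the diagonal matrices $X$ and $\Lambda$ and of the entries of $|B_d^\intercal|$ (guaranteed by positive compression ratios), this yields nonsingularity of the mass matrix, so that the nodal dynamics is a genuine ODE in $\bm \rho$ with $\bm d$ appearing as a parameter.

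The core computation is then differentiating $f_{ij}(t,u,w) = \operatorname{sgn}(w)\sqrt{\left|\,-u\cdot(2D_{ij}/\lambda_{ij}\ell_0)\cdot w\,\right|}$ in $w$. Writing $c_{ij} = 2D_{ij}/(\lambda_{ij}\ell_0) > 0$ and using that the first argument $u$ is a strictly positive average density, one gets $f_{ij} = \operatorname{sgn}(w)\sqrt{c_{ij}u}\,\sqrt{|w|}$, whence for $w \neq 0$ the derivative is $\partial f_{ij}/\partial w = \tfrac12 \sqrt{c_{ij}u}\,|w|^{-1/2} > 0$. Thus $f_{ij}$ is differentiable and strictly increasing in $w$ away from the origin, condition \eqref{eq:monotone_condition} holds, and Proposition \ref{prop:monotonicity} delivers monotone-parameterization of \eqref{eq:ode_nodal} with respect to $\bm d$, completing the argument.

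The main obstacle I anticipate is the point $w = 0$, i.e.\ vanishing pressure gradient and zero flow, where the square-root nonlinearity has infinite slope and $f_{ij}$ is not classically differentiable. I would dispatch this either by a limiting argument, noting that $f_{ij}$ remains continuous and strictly monotone through $w = 0$, which is what actually underlies the comparison ordering in the monotone-parameterized property, or by observing that in the regime of interest the set of times at which any edge flow vanishes has measure zero and does not affect the integrated comparison. A secondary technical point is making the invertibility of $|A_d| X \Lambda |B_d^\intercal|$ fully rigorous, since the absolute-value matrix $|A_d|$ and the compression-weighted $|B_d^\intercal|$ must be shown to combine with the positive diagonal weights into a nonsingular $M \times M$ matrix rather than merely a full-row-rank factor.
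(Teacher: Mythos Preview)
Your proposal is correct and follows essentially the same route as the paper: invoke Lemma~\ref{lem1:monotonicity} to obtain invertibility of $|A_d| X \Lambda |B_d^\intercal|$ and thereby cast \eqref{eq:ode_nodal} in explicit ODE form, then verify condition~\eqref{eq:monotone_condition} for the dissipation function \eqref{eq:fij} by direct differentiation using $u>0$, and conclude via Proposition~\ref{prop:monotonicity}. You are in fact more scrupulous than the paper in flagging the non-differentiability at $w=0$ and the need to justify that full row rank of $A_d$ transfers to nonsingularity of the absolute-value product; the paper's proof simply asserts both points without further comment.
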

\begin{proof}
Recall that $A_d$ is obtained by removing $A_s$ from the incidence matrix $A$ of the graph $\mathcal{G}$.  If $|\hat{\mathcal V}_s|=k+1$ for some integer $k\geqslant 0$, by Lemma \ref{lem1:monotonicity} we see that $r(A_d)=|\hat{\mathcal V}_d|$, i.e., that $A_d$ is full rank.  It is straightforward to also show that $r(B_d)=r(A_d)$.  Because $|A_d|$ and $|B_d|$ are full rank and positive, and $X$ and $\Lambda$ are diagonal and positive, $|A_d| X \Lambda |B_d^\intercal|$ is invertible.

By multiplying Eq. \eqref{eq:ode_nodal} by $(|A_d| X \Lambda |B_d^\intercal|)^{-1}$ and re-arranging terms, the nodal dynamics can be written as
\begin{flalign}
\dot{\bm \rho} & = (|A_d| X \Lambda |B_d^\intercal|)^{-1} \nonumber \\ & \qquad \times [4(A_d X \bm \Phi - \bm d) - |A_d| X \Lambda |B_s^\intercal| \dot{\bm s}],  & \label{eq:ode_nodal_mod}
\end{flalign}
where the $k$th entry of $\bm \Phi$ is given by
\begin{flalign}
& \Phi_k = - f_{\mu(ij)}\left(t, \frac 12 \left( |B^\intercal| \bm \rho^N \right)_k, \left( \Lambda^{-1} B^\intercal \bm \rho^N\right)_k\right). & \label{eq:phi_f_mod}
\end{flalign}
The discretized dynamics as expressed in \eqref{eq:ode_nodal_mod}-\eqref{eq:phi_f_mod} are of the ODE form of the monotone parameterized control system in Definition \ref{def:monotone_cs}, and therefore Prop. \ref{prop:monotonicity} may be applied.

From Prop. \ref{prop:monotonicity}, it is sufficient to show that Eq. \eqref{eq:monotone_condition} is satisfied for the dissipation function in Eq. \eqref{eq:fij}. To that end, 
\begin{flalign*}
& \frac{\partial}{\partial w}f_{ij}(t, u, w) = \operatorname{sgn}(w) \cdot \frac 12 \left| - u\cdot\frac{2D_{ij}}{\lambda_{ij} \ell_0}\cdot w \right|^{-\tfrac 12} \cdot \operatorname{sgn}(-u\cdot w) \cdot & \\ 
& \qquad \qquad \qquad \qquad \qquad \qquad \qquad \qquad \left(-u\cdot \frac{2D_{ij}}{\lambda_{ij} \ell_0}\right) &  \\
& = \operatorname{sgn}(w^2) \cdot \frac 12 \left| - u\cdot\frac{2D_{ij}}{\lambda_{ij} \ell_0}\cdot w \right|^{-\tfrac 12} \cdot \left(u\cdot \frac{2D_{ij}}{\lambda_{ij} \ell_0}\right) & \\
& = \frac 12 \left| - u\cdot\frac{2D_{ij}}{\lambda_{ij} \ell_0}\cdot w \right|^{-\tfrac 12} \cdot \left(u\cdot \frac{2D_{ij}}{\lambda_{ij} \ell_0}\right)
\end{flalign*}
Because  $\frac 12 \left( |B^\intercal| \bm \rho^N \right)_k > 0$ in \eqref{eq:phi_f_mod}, it follows that $u > 0$ holds for each evaluation of $f$, so that the final expression above is positive.  Therefore $\frac{\partial}{\partial w}f_{ij}(t, u, w) > 0$, which completes the proof. 
\end{proof}
While it has been shown previously that discretizations of dissipative flow networks possess the monotonicity property \cite{Zlotnik2016}, the result above proves that the same property holds for the specific model of natural gas network transients derived here, and in particular for the specific discretization employed.  Note that although the DAE system \eqref{eq:ode_nodal} may be written in the ODE form \eqref{eq:ode_nodal_mod}-\eqref{eq:phi_f_mod} for analytical convenience, we use the nodal DAE system in practice to avoid numerical ill-conditioning. Corollary \ref{cor:monotonicity} is a powerful result that will be used to establish a uniqueness property for time-periodic boundary value problems for the nodal dynamics \eqref{eq:ode_nodal}, which yields important implications for state observability.

\section{Estimation Problem Formulations} \label{sec:state_param}
We now present the formulations for the state estimation and the joint state  and parameter estimation problems. All the estimation problems are formulated as nonlinear least squares problem when the time-varying withdrawals and the compressor ratios are known a priori. We present a formulation for each of the following three problems in that order: (1) state estimation with noiseless withdrawal values, (2) state estimation with noisy (uncertain) withdrawal and nodal density measurements, and (3) joint state  and parameter estimation with noisy (uncertain) withdrawal and nodal density measurements. We also remark that all the measurements are assumed to be obtained only from the physical nodes, $\mathcal V$, in the graph $\mathcal G$.

\subsection{State estimation with noiseless withdrawal measurements} \label{subsec:state_1}
If the exact (noiseless) withdrawal profiles $\bm d$ and slack node pressures $\bm s$ are known a priori, we claim  that no additional measurements of the nodal densities or mass flux values at the edges are required to estimate all the states of the system in the presence of time-periodic boundary conditions and time-periodic compression. The claim follows from the following proposition
\begin{proposition} \label{prop:uniqueness}
Under the assumptions \ref{subsec:assumptions}, the system of ODEs in \eqref{eq:ode_nodal} admits a unique time-periodic solution for given time-periodic withdrawal profiles $\bm d$, slack node pressures $\bm s$, and compression ratios $\{ \ubar{\bm \alpha}, \bar{\bm \alpha}\}$.
\end{proposition}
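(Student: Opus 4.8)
The plan is to separate the claim into existence and uniqueness. Existence of a time-periodic solution is supplied directly by assumption \ref{assump:2}, so the entire content of the proposition is uniqueness. I would recast uniqueness as uniqueness of a fixed point of the \emph{Poincar\'e} (time-$T$) map. Fixing the periodic data $\bm d$, $\bm s$, and $\{\ubar{\bm\alpha},\bar{\bm\alpha}\}$, let $P(\bm y)\triangleq\bm\rho(T;\bm y)$ denote the value at time $T$ of the solution of \eqref{eq:ode_nodal_mod}--\eqref{eq:phi_f_mod} with $\bm\rho(0)=\bm y$, which is well defined by assumption \ref{assump:1} and smooth in $\bm y$ on the interior of the admissible set. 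Since every input is $T$-periodic, a trajectory $\bm\rho(t)$ is $T$-periodic if and only if $\bm y=\bm\rho(0)$ is a fixed point of $P$. Taking identical withdrawal parameters in Definition \ref{def:monotone_cs}, Corollary \ref{cor:monotonicity} specializes to monotonicity of the flow in its initial condition, so $P$ is order-preserving: $\bm y_1\leqslant\bm y_2$ implies $P(\bm y_1)\leqslant P(\bm y_2)$.

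The central obstacle is that monotonicity of $P$ \emph{alone} does not force a unique fixed point; applied to the meet and join of two candidate initial states it yields only a bracketing pair of fixed points, hence existence rather than uniqueness. To obtain uniqueness I would exploit the dissipative structure of \eqref{eq:ode_nodal} to upgrade monotonicity to a strict contraction of $P$. Linearizing \eqref{eq:ode_nodal_mod}--\eqref{eq:phi_f_mod} along a trajectory produces a variational system $\dot{\bm\xi}=J(t)\bm\xi$ whose Jacobian $J(t)$ inherits its off-diagonal sign pattern from $\partial f_{ij}/\partial w>0$, the very inequality established in the proof of Corollary \ref{cor:monotonicity}. Thus $J(t)$ is Metzler and the variational flow $\Psi(t)$ is entrywise nonnegative, so in particular the monodromy matrix $\Psi(T)=DP(\bm y)$ is nonnegative.

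The quantitative step I expect to be the hard part is showing that friction makes $\Psi(T)$ a \emph{strict} contraction. I would argue that the Darcy--Wiesbach term contributes a strictly negative drift on the diagonal of $J(t)$, reflecting the physical fact that perturbations in line-pack relax because the slack condition \eqref{eq:dae0e} pins the boundary pressures rather than conserving total mass, so the linearization carries no eigenvalue-one direction. I would then use this to bound the spectral radius $\varrho\big(\Psi(T)\big)\leqslant\gamma<1$ uniformly over the compact set of admissible densities determined by \eqref{eq:box_density}. Care is needed near $\bm\Phi=\bm 0$, where $f_{ij}$ has a square-root singularity in $w$; I would carry out the estimate on the interior of the admissible set and extend to the boundary by continuity. (An alternative route would exploit the concavity of the square root in $f_{ij}$ to show $P$ is sub-homogeneous and invoke a nonlinear Perron--Frobenius argument, but the inhomogeneous forcing by $\bm s$ and $\bm d$ makes the contraction estimate the cleaner path.)

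With the uniform bound in hand, uniqueness follows quickly. Writing $P(\bm y_2)-P(\bm y_1)=\big(\int_0^1 DP(\bm y_1+\sigma(\bm y_2-\bm y_1))\,d\sigma\big)(\bm y_2-\bm y_1)$ and using nonnegativity of $DP$ together with $\varrho\big(DP\big)\leqslant\gamma<1$, the map $P$ is a contraction in the weighted norm adapted to the Perron vector of the averaged monodromy matrix, so by Banach's fixed-point theorem $P$ has at most one fixed point; combined with the existence from assumption \ref{assump:2}, the periodic density $\bm\rho$ is unique. Finally, the remaining states are recovered uniquely from $\bm\rho$, namely the edge flows $\bm\Phi$ through \eqref{eq:phi_vec_f} and the boundary densities $\ubar{\bm\rho},\bar{\bm\rho}$ through \eqref{eq:dae0a}, which completes the argument.
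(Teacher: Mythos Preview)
Your route is far more elaborate than the paper's and leaves its decisive step unproved. The paper does not pass through the Poincar\'e map at all: it argues directly from Corollary~\ref{cor:monotonicity} by a two-sided squeeze in the \emph{parameter}. Given two periodic solutions $\check{\bm\rho}_1,\check{\bm\rho}_2$ associated with the same withdrawal profile $\check{\bm d}$, equality of the withdrawals gives both $\check{\bm d}_1\geqslant\check{\bm d}_2$ and $\check{\bm d}_1\leqslant\check{\bm d}_2$; monotone parameterization then yields $\check{\bm\rho}_1\leqslant\check{\bm\rho}_2$ and $\check{\bm\rho}_1\geqslant\check{\bm\rho}_2$, hence $\check{\bm\rho}_1=\check{\bm\rho}_2$. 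No variational equation, no monodromy matrix, no spectral-radius bound. Your observation that order preservation of the time-$T$ map by itself does not force a unique fixed point is correct, and it explains why the paper works with monotonicity in $\bm d$ rather than only in the initial condition---you discarded precisely the lever the paper uses.

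Independently of the comparison, your proposal has a genuine gap at the step you yourself flag as hard: the uniform bound $\varrho\big(\Psi(T)\big)\leqslant\gamma<1$ is motivated physically (friction plus the slack boundary condition removes the mass-conservation null mode) but never established. The Jacobian of $f_{ij}$ in its last argument behaves like $|w|^{-1/2}$ near $w=0$, so ``extend to the boundary by continuity'' does not give a contraction constant bounded away from one; and the sub-homogeneity alternative you mention would have to cope with the affine forcing by $\bm s$ and $\bm d$, which you correctly note is awkward. None of this machinery is needed once you run the parameter-monotonicity squeeze the paper uses.
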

\begin{proof}
We utilize the monotonicity properties of Eq. \eqref{eq:ode_nodal} to prove the result. Consider nominal profiles for the vector of slack node densities is $\bm s = \check{\bm s}$ and the vector of withdrawals $\bm d = \check{\bm d}$. We claim that there is a unique vector of non-slack nodal densities $\check{\bm \rho}$ that satisfies the system of ODEs in \eqref{eq:ode_nodal}. For the sake of contradiction, we suppose not, and let $\check{\bm \rho}_1$ and $\check{\bm \rho}_2$ denote any two distinct solutions to the ODE system. Let $\check{\bm d}_1$ and $\check{\bm d}_2$ denote the corresponding withdrawal profiles that induce these solutions. We know, by the assumption $\bm d = \check{\bm d}$, that $\check{\bm d}_1 = \check{\bm d}_2$. Hence, we have $\check{\bm d}_1 \geqslant \check{\bm d}_2$ and $\check{\bm d}_1 \leqslant \check{\bm d}_2$.  By Corollary $\ref{cor:monotonicity}$, these inequalities imply that $\check{\bm \rho}_1 \leqslant \check{\bm \rho}_2$ and $\check{\bm \rho}_1 \geqslant \check{\bm \rho}_2$. It follows that $\check{\bm \rho}_1 = \check{\bm \rho}_2 = \check{\bm \rho}$. 
\end{proof}
An important implication of the above result is that when measurements of withdrawals $\bm d$, slack node pressures $\bm s$, and compression ratios $\{ \ubar{\bm \alpha}, \bar{\bm \alpha}\}$ are available, no measurements of any states of the system are required to estimate all the states of the system.  In this particular case, the state estimation problem reduces to solution of an initial value problem (IVP) for the ODE system \eqref{eq:ode_nodal}, or equivalently, the system of DAEs given by \eqref{eq:dae_final} with given initial state $\bm{\rho}(0)$. For the subsequent estimation computations, we utilize a finite difference approximation for the derivatives in the ODE/DAE system (in Eq. \eqref{eq:ode_nodal}/\eqref{eq:dae_final}) and convert the nonlinear system of ODEs/DAEs to a system of nonlinear algebraic equations. The solution to the resulting optimization problem is then computed using an interior point solver.

\subsection{State estimation with noisy (uncertain) withdrawal and nodal density measurements} \label{subsec:state_2}
In practice, the time-periodic withdrawal profiles are measured using flow meters and are noisy or uncertain. Additionally, the pressure gauges at the junctions provide noisy nodal pressure measurements that can be converted to noisy density measurements. In this setting, the approach presented in Sec. \ref{subsec:state_1} needs to be modified to suit this case. To that end, let $\tilde{d}_j(t)$, $j \in \mathcal V_D$ denote these measured withdrawal profiles and $\tilde{\rho}_j(t)$, $j \in \mathcal V_D$ denote the density (pressure) measurements at the non-slack nodes; let $\tilde{\bm d}$ and $\tilde{\bm \rho}$ represent the corresponding vector of measurements. Similar to what has been done for the previous state estimation problem, we assume that the slack nodal pressure vector and the compression ratios are known. We can then formulate a weighted nonlinear least squares problem, where $W_1$ and $W_2$ are the weighting matrices, using the following running cost objective function: 

\begin{flalign}
\mathcal{L}(\bm d,\tilde{\bm d}, \bm \rho,\tilde{ \bm \rho}) \equiv & \int_0^T(\bm d - \tilde{\bm d})^\intercal W_1 (\bm d - \tilde{\bm d}) \nonumber \\  & \qquad +  (\bm \rho - \tilde{ \bm \rho})^\intercal W_2 (\bm \rho - \tilde{ \bm \rho}) \mathrm{d}t & \label{eq:state_2_obj} 
\end{flalign}

Then, the state estimation problem is formulated as 
\begin{subequations}
\begin{flalign}
    & \min_{\bm \rho,\,\bm\Phi,\,\bm d} \, \mathcal{L}(\bm d,\tilde{\bm d}, \bm \rho,\tilde{ \bm \rho})& \notag \\ 
    &\text{subject to: } \text{Eqs. \eqref{eq:dae_final}, } & \notag \\
    & {\bm \rho}^{\min} \leqslant {\bm \rho} \leqslant {\bm \rho}^{\max}, \, &
     \label{eq:state_2_bounds} \\
    & {\bm \rho}(0) = {\bm \rho}(T), \, {\bm \Phi}(0) = {\bm \Phi}(T), \text{ and } {\bm d}(0) = {\bm d}(T)  &\label{eq:state_2_periodicity}
\end{flalign}
\label{eq:state_2}
\end{subequations}
The optimized variables are densities $\bm \rho$, per-area mass flows $\bm \Phi$, and estimated withdrawals $\bm d$.  The constraints in Eq. \eqref{eq:state_2_bounds} impose bounds on the nodal density and edge flux state variables. Notice that the above nonlinear least squares formulation computes time-periodic estimates of the state variables and the withdrawals. This formulation is justified by the principle of convergence in the limit as the magnitude of noise decreases to zero.  That is, the state estimates approach the state estimates that are obtained in the noiseless withdrawal case in Sec. \ref{subsec:state_1} as the noise in the measurement $\tilde{d}$ is decreased.

The above formulation is then solved via a gradient descent algorithm after approximating the derivatives using finite differences and the integral using the trapezoidal rule. Although the time discretization used is necessarily coarse, which is required for tractability of the large-scale problems of interest, the use of such representations in optimal control problems has been validated in the regime of interest in previous studies, as discussed in Section \ref{sec:model}. A major implication of using this coarse discretization for the dynamic constraints is the restriction of measurements used in the objective function to the same coarse grid of collocation points.  In this setting, the discretized noise process $\bm \eta$ can be interpreted to also account for errors that arise from coarse time discretization of the dynamic constraints.
In the next section, we formulate a joint state  and parameter estimation problem, similar to \eqref{eq:state_2}, where the parameters that are of interest are the friction factors for all pipes. 

\subsection{State \& parameter estimation with noisy withdrawal and nodal density measurements} \label{subsec:state_3}
The friction factor of each pipe denoted by $\lambda_{ij}$ is contained within the matrix $K$ in \eqref{eq:dae1b}. For the joint state  and parameter estimation problem, in addition to including the variables $\bm \rho$, $\bm \Phi$, and $\bm d$ in the nonlinear least squares formulation, $K$ is also a diagonal variable matrix. Except for this difference, the formulation is similar to the nonlinear least squares formulation in given by Eq. \eqref{eq:state_2}. Then, the state estimation problem is formulated as 
\begin{subequations}
\begin{flalign}
    & \min_{\bm \rho,\,\bm\Phi,\,\bm d,\,K} \, \mathcal{L}(\bm d,\tilde{\bm d}, \bm \rho,\tilde{ \bm \rho}) \notag \\ 
    & \text{subject to: } \text{Eqs. \eqref{eq:dae_final}, } & \notag \\
    & {\bm \rho}^{\min} \leqslant {\bm \rho} \leqslant {\bm \rho}^{\max}, & 
     \label{eq:state_2_bounds_sp} \\
    & {\bm \rho}(0) = {\bm \rho}(T), \, {\bm \Phi}(0) = {\bm \Phi}(T), \text{ and } {\bm d}(0) = {\bm d}(T).  &\label{eq:state_2_periodicity_sp}
\end{flalign}
\label{eq:state_2_sp}
\end{subequations}
Following a finite difference approximation of the derivatives, a similar interior point optimization algorithm is used to solve this formulation.  The optimized variables are densities $\bm \rho$, per-area mass flows $\bm \Phi$, estimated withdrawals $\bm d$, and friction factor parameters $\lambda_{ij}$ for each edge $\mathcal{E}$ in the original physical graph $\mathcal{G}$.

\section{Computational Results} \label{sec:results}
We now present the computational results for all of the algorithms presented in this paper. We consider three test cases: (1) a single pipe case, (2) a 4-node network, and (3) a 25-node network. Each formulation presented in Sec. \ref{sec:state_param} is converted into a nonlinear program (NLP) using a finite difference approximation on the derivative terms. This technique of converting a continuous time problem to a finite-dimensional constrained NLP has been widely used in the optimal control literature \cite{Ross2003}, and has been applied previously in the context of gas pipeline systems \cite{Zlotnik2015}. We use a primal-dual interior point solver, IPOPT \cite{Wachter2009}, together with Automatic Differentiation in Julia/JuMP \cite{Dunning2017}, to compute the jacobians, and solve the resulting NLPs. IPOPT is chosen due to its ability to leverage sparse linear algebra computations. In the following section, we present a complete description of the test cases. An error tolerance of $10^{-4}$ is used for all the computational experiments, which were evaluated on a $2.9$ GHz, Intel Core i5 machine with $16$ GB RAM. 


\subsection{Description of the test cases} \label{subsec:description}
Three test instances were used to evaluate performance of the proposed computational estimation method.  The single pipe test case contains two nodes connected by a single pipeline of length $100$ km and cross-sectional diameter $0.5$ m. Gas is being supplied at one of the nodes (source node) at a pressure of $942.75$ psi; this slack node pressure is boosted immediately by a compressor at the source end. Similarly, gas is being withdrawn  according to the function $68.094\left(1+ 0.1\sin \frac{4\pi t}{T}\right)$ kg/s at the other node. The pressure of the gas as it flows through the pipeline is bounded between $500$ psi and $1100$ psi. The network model $\mathcal{G}$ for a single pipe is discretized by adding auxiliary nodes at an interval of 5 km to create the refined graph $\hat{\mathcal{G}}$.  The Fig. \ref{fig:sim_pressure} and \ref{fig:sim_flux} show the simulated nodal pressure and edge mass flux, respectively, for the refined (discretized) graph for the single pipe case with the aforementioned slack pressure, withdrawal profile, parameters, and a time horizon $T=24$ hours. All forward simulations were performed using an implicit Euler DAE integrator for the DAE system in Eq. \eqref{eq:dae_final}; the software Sundials \cite{Hindmarsh2005} was used to implement the simulation using adaptive time-stepping to meet a relative error tolerance of 10$^{-4}$. Noise is then added to these simulated values and used as input to the estimation problems. We show plots of the simulated and estimated profiles only for the single pipe case and resort to tables to illustrate the effectiveness of the estimation algorithms on the remaining two test cases. 
\begin{figure}[ht!]
    \centering
    \includegraphics[scale=0.5]{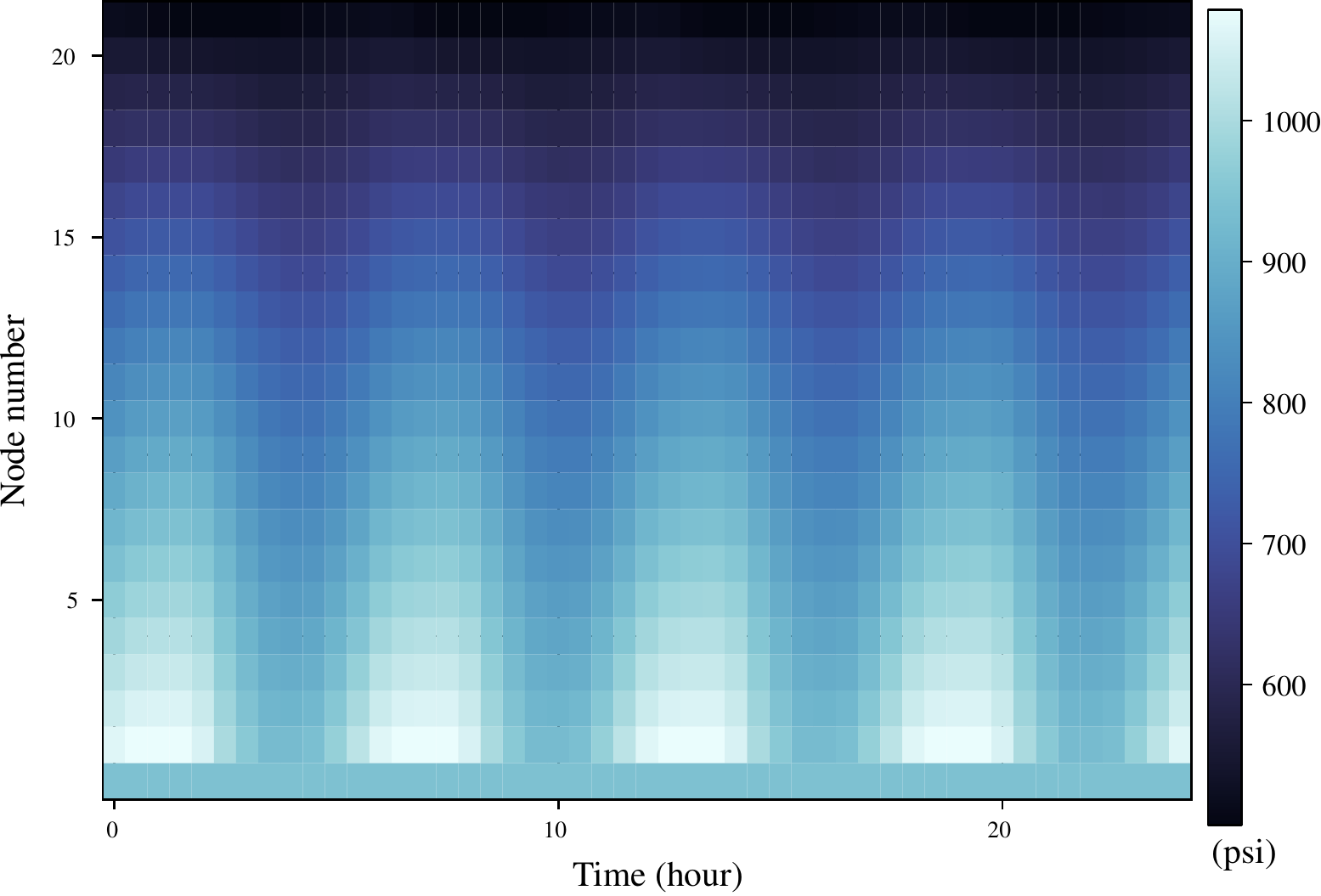}
    \caption{Simulated values for nodal pressures for the single-pipe case.  Each row in the image corresponds to the time-series of pressure values for one of 21 nodes in the refined graph with 5 km discretization of the 100 km pipe.}
    \label{fig:sim_pressure}
\end{figure}

\begin{figure}[ht!]
    \centering
    \includegraphics[scale=0.5]{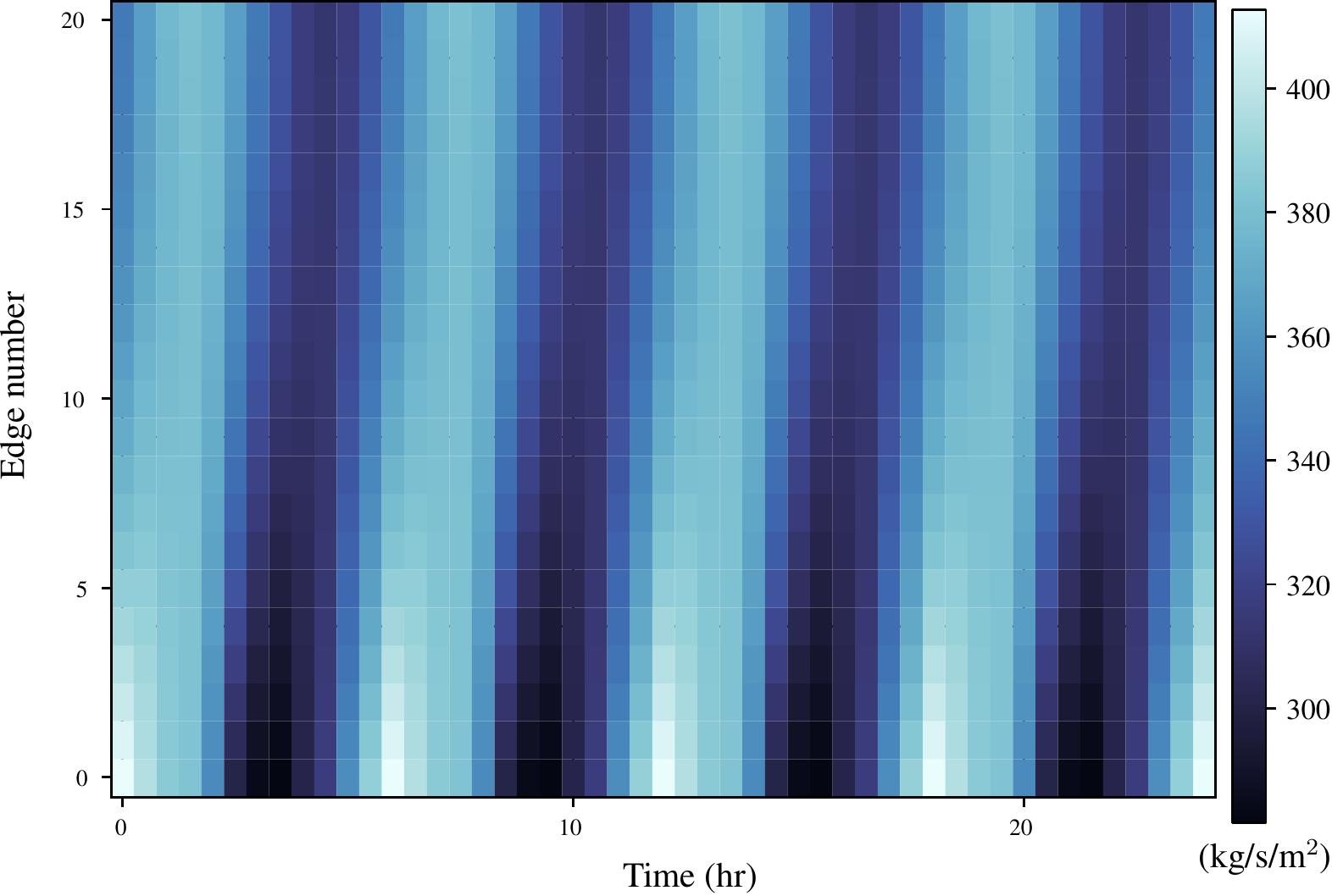}
    \caption{Simulated values for average edge mass flux for the single-pipe case.  Each row in the image corresponds to the time-series of mass flux values for one of the 20 edges in the refined graph with 5 km pipe segments.}
    \label{fig:sim_flux}
\end{figure}

The 4-node network contains a single loop. The schematic of the 4-node network is shown in the Fig. \ref{fig:4node}. The network contains two compressors and gas is being withdrawn at the nodes $2$, $3$ and $4$. The pressure at the slack node, $1$, is fixed to $500$ psi and gas is withdrawn at the withdrawal nodes in a time-periodic manner. Similar to the single-pipe case, the compression functions at the two compressors are time-periodically varied over the time horizon of $T = 24$ hours. 
\begin{figure}[ht!]
\centering
    \includegraphics[scale=0.75]{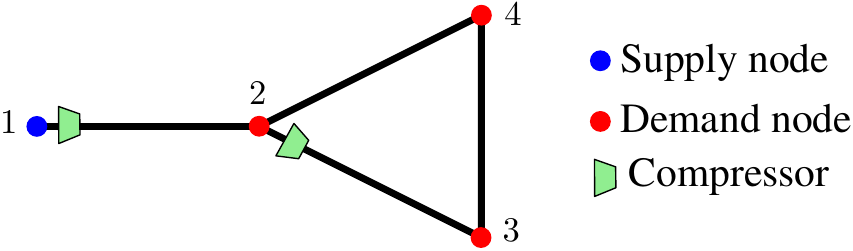}
    \caption{4-node network schematic.}
\label{fig:4node}
\end{figure}

The 25-node network is a tree with $5$ compressors and $24$ pipelines. The schematic of the network is shown in the Fig. \ref{fig:25node}. The pipelines in both the schematics in Fig. \ref{fig:4node} and \ref{fig:25node} are not to scale, they are presented to illustrate the topology of the network only. The slack pressure is fixed to $500$ psi. In the subsequent sections, we present computational results showing that the friction factor values can be estimated using noisy measurements. 
\begin{figure}[ht!]
\centering
    \includegraphics[scale=0.75]{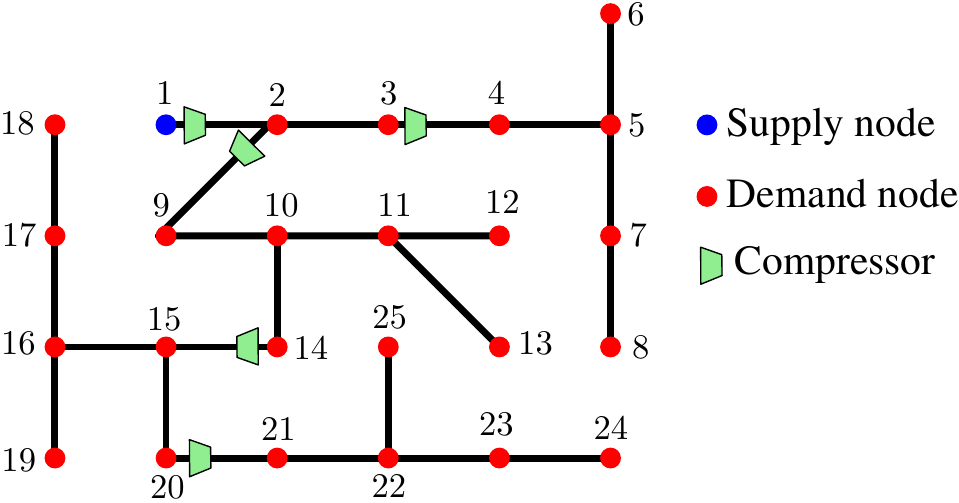}
    \caption{25-node network schematic.}
\label{fig:25node}
\end{figure}
Similar to the single pipe case, a simulation to obtain the nodal densities and average edge mass flux is performed with time-periodic withdrawal profiles and compression functions for each compressor. For networks with more complicated structure than a single pipe, synthesizing the withdrawal and compression functions (required as parameters) is not trivial. The form of these profiles is in principle not important, as long as they result in feasible pressures and flows throughout the estimation time horizon.  We obtain these parameter functions by solving an optimization problem subject to the same network flow model described above as constraints.  Specifically, they are obtained by solving optimal control problems for the example systems where the objective is to minimize compressor power, similar to formulations in previous studies \cite{Zlotnik2015,Mak2016}.  These time-series are also used as parameters in the ground truth simulation for the estimation case studies as well as synthetic measurement time-series after the addition of noise.  This guarantees that the data for the case studies are self-consistent and satisfy the stated inequality constraints on the state variables.  The pipe friction factor $\lambda$ is set to $0.011$ for the single pipe case and $0.01$ for the 4-node and 25-node networks. A spatial discretization of $5$ km is used for all the remaining computational experiments, and this is implemented by a spatial graph refinement where each edge $(i,j)$ is divided into the minimum number $n_{ij}$ of segments of equal length $\Delta_{ij}\leqslant L_{ij}/n_{ij}$.   

\subsection{Error performance of the state estimation problem} \label{subsec:error_results_state}
In this section, we present the results of the state estimation procedure for the problem defined in Eq. \eqref{eq:state_2} on the three test cases. To illustrate the computational effectiveness of the estimation problems, we use two relative error metrics: (i) average relative error in the state estimates and (ii) maximum relative error in the state estimates; the averages and maximum values are computed over nodal density estimates and edge flow-rates separately. For the single pipe case, we additionally present the the nodal density and edge mass flux profiles for various noise levels. For the 4-node and the 25-node networks we present the results of the error metrics for various noise levels.  

We choose three different measurement noise levels for the withdrawal and nodal pressure measurements i.e., we use an additive white Gaussian noise model with mean $0$ and a standard deviation of $0.5\%$, $1\%$, $1.5\%$, and $10\%$ of the true value obtained from the simulations. In the single pipe case, these noise levels correspond to a maximum error of $2.5$, $5$, $7.5$, and $50$ kg/s, respectively, in the withdrawal measurements; as for the nodal pressure measurements, the noise levels correspond to a maximum error $5$, $10$, $15$, and $100$ psi, respectively. From a practical stand-point, the uncertainty level of $10\%$ is quite high. However, simulations are performed for this noise level to evaluate whether the state and parameter estimation approach is feasible at such elevated noise levels, or whether the method breaks down.  We assume that the measurement noise is additive, and that we have withdrawal  and pressure measurements at every physical node in $\mathcal{V}_d$ where gas is being withdrawn.

The tables \ref{tab:1_1} -- \ref{tab:1_3} show the relative error metrics for at the chosen noise levels for the single pipe, 4-node, and 25-node networks, respectively.  Each value is obtained from a single instance of an estimation problem with synthetic measurement data. The column headings used in the tables are defined as:

\noindent $\bm{nl}$: additive white Gaussian noise (as defined in the previous paragraph) added to the measurements, in percent.\\
\noindent $e^d_{\max}$: maximum relative error in all the estimates of the withdrawal values, in percent.\\
\noindent $e^p_{\max}$: maximum relative error in all the estimates of nodal pressures, in percent.\\
\noindent $e^{\varphi}_{\max}$: maximum relative error in all the estimates of the average mass flux at the edges, in percent. \\
\noindent $e^d_{\operatorname{avg}}$: average relative error in all the estimates of the withdrawal values, in percent.\\
\noindent $e^p_{\operatorname{avg}}$: average relative error in all the estimates of nodal pressures, in percent.\\
\noindent $e^{\varphi}_{\operatorname{avg}}$: average relative error in all the estimates of the average mass flux, in percent. 

\begin{table}[htbp]
    \centering
    \caption{State estimation errors for the single pipe case.}
    \label{tab:1_1}
    \begin{tabular}{crrrrrr}
    \toprule
        $\bm{nl}$ & $e^d_{\max}$ & $e^p_{\max}$ & $e^{\varphi}_{\max}$ & $e^d_{\operatorname{avg}}$ & $e^p_{\operatorname{avg}}$ & $e^{\varphi}_{\operatorname{avg}}$\\ 
    \midrule
    $10$ & $17.50$ & $18.86$ & $16.03$ & $3.95$ & $0.83$ & $1.34$ \\
    $1.5$ & $3.58$ & $3.60$ & $3.33$ & $0.89$ & $0.25$ & $0.35$ \\
    $1.0$ & $2.74$ & $2.22$ & $2.52$ & $0.63$ & $0.22$ & $0.29$ \\
    $0.5$ & $1.65$ & $1.20$ & $1.51$ & $0.36$ & $0.13$ & $0.17$ \\
    \bottomrule
    \end{tabular} 
\end{table}

\begin{table}[htbp]
    \centering
    \caption{State estimation errors for the 4-node network.}
    \label{tab:1_2}
    \begin{tabular}{crrrrrr}
    \toprule
        $\bm{nl}$ & $e^d_{\max}$ & $e^p_{\max}$ & $e^{\varphi}_{\max}$ & $e^d_{\operatorname{avg}}$ & $e^p_{\operatorname{avg}}$ & $e^{\varphi}_{\operatorname{avg}}$\\ 
    \midrule
    $10$ & $17.44$ & $10.09$ & $53.04$ & $6.05$ & $1.61$ & $3.66$ \\
    $1.5$ & $4.94$ & $2.01$ & $5.17$ & $0.92$ & $0.31$ & $0.47$ \\
    $1.0$ & $2.23$ & $1.35$ & $3.86$ & $0.67$ & $0.24$ & $0.32$ \\
    $0.5$ & $1.46$ & $1.03$ & $2.81$ & $0.37$ & $0.24$ & $0.18$ \\
    \bottomrule
    \end{tabular}   
\end{table}

\begin{table}[htbp]
    \centering
    \caption{State estimation errors for the 25-node network.}
    \label{tab:1_3}
    \begin{tabular}{crrrrrr}
    \toprule
        $\bm{nl}$ & $e^d_{\max}$ & $e^p_{\max}$ & $e^{\varphi}_{\max}$ & $e^d_{\operatorname{avg}}$ & $e^p_{\operatorname{avg}}$ & $e^{\varphi}_{\operatorname{avg}}$\\ 
    \midrule
    $10$ & $86.42$ & $12.41$ & $329.92$ & $26.41$ & $1.73$ & $33.68$ \\
    $1.5$ & $16.84$ & $2.37$ & $262.98$ & $4.17$ & $0.26$ & $4.84$ \\
    $1.0$ & $11.51$ & $0.99$ & $153.11$ & $2.55$ & $0.18$ & $3.16$ \\
    $0.5$ & $5.17$ & $0.50$ & $90.27$ & $1.24$ & $0.09$ & $1.43$ \\
    \bottomrule
    \end{tabular}
\end{table}

The large relative errors in $\varphi$ estimates in the 25-node network case study occur because of flow reversals.  When the flow is near zero, relative errors are amplified significantly.  To remove the effect of flow reversals from computation of the maximum and average relative flow error, we include data for time points when the flow magnitude is above a threshold of 1 kg/s. As expected, the average relative errors in the state estimates and withdrawal estimates obtained by solving the state estimation problem in Eq. \eqref{eq:state_2} decrease with the decrease in noise variance. The average relative errors are fairly small for measurement uncertainty of 1.5\% or less, which indicates that the least squares approach is  effective in estimating all the states of the system. However, at an elevated uncertainty level of 10\%, the maximum and average relative errors in flow estimates are high for the 25-node case study, although the pressure estimates are quite reasonable, with less than 2\% average relative error.  Thus, the method in the test case estimates pressure more accurately than flow, and this trade-off could be calibrated by changing the weighting matrices $W_1$ and $W_2$ in the objective function \eqref{eq:state_2_obj}. The Fig. \ref{fig:s_pressure} and \ref{fig:s_flux} shows the absolute errors in the nodal density and average mass flux profiles at the edges for the single pipe case. We remark that for all the computational experiments in this paper, we have assumed that noisy nodal pressure and withdrawal measurements are available at all physical nodes where gas is being withdrawn. But in practice, this might not be the case. We relegate the development of state estimation for the flow of natural gas through a network using such sparse measurements to future work. In fact, observability of such general nonlinear system that have a network structure is itself an open problem in control theory. 

\begin{figure}[ht!]
    \centering
    \includegraphics[scale=0.5]{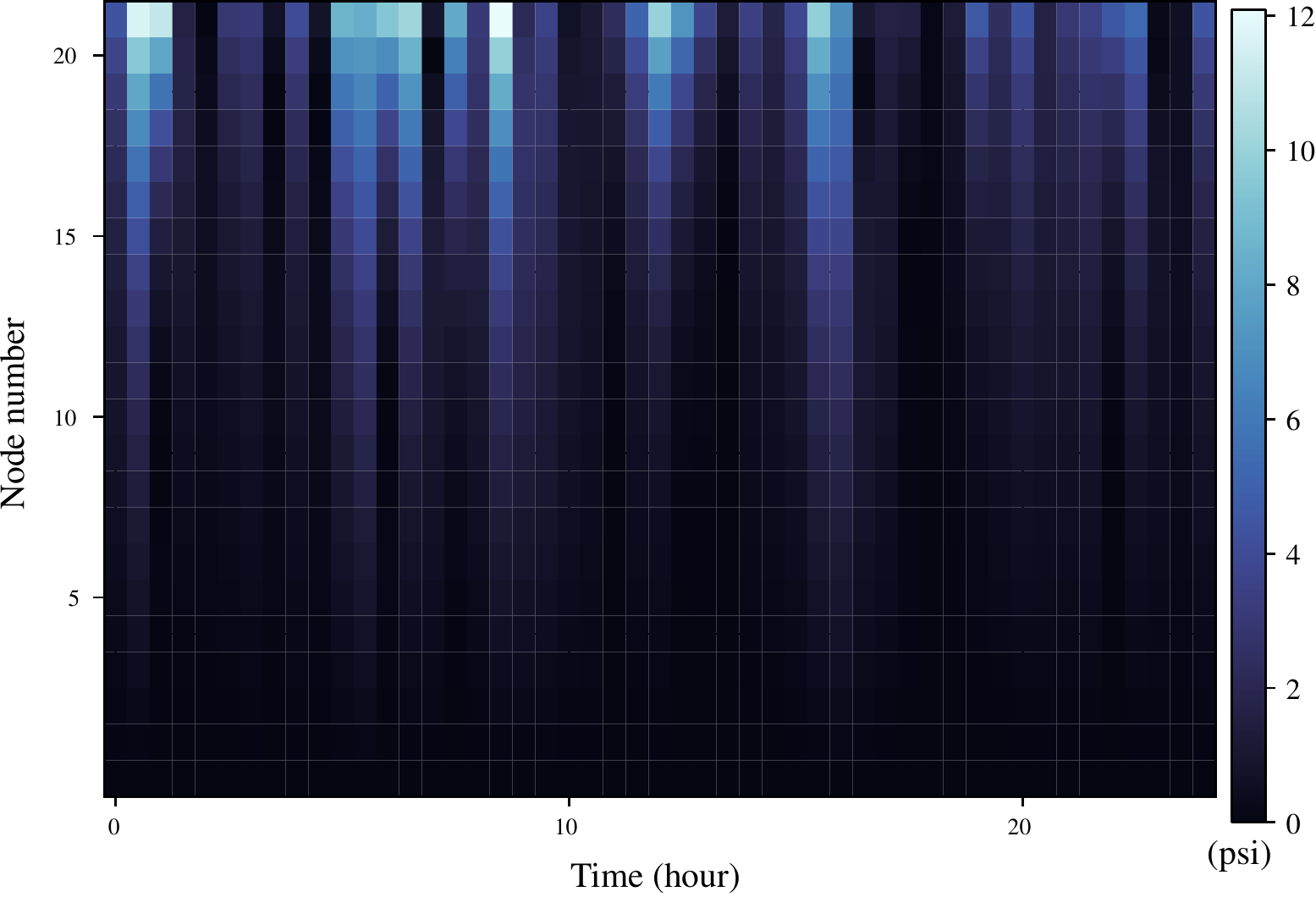}
    \caption{Absolute error in the nodal pressure estimates for the single pipe case for a noise level of $1.5\%$ in the measurements.  Each row in the image corresponds to the time-series of pressure estimates for one of 21 nodes in the refined graph with 5 km discretization of the 100 km pipe.}
    \label{fig:s_pressure}
\end{figure}

\begin{figure}[ht!]
    \centering
    \includegraphics[scale=0.5]{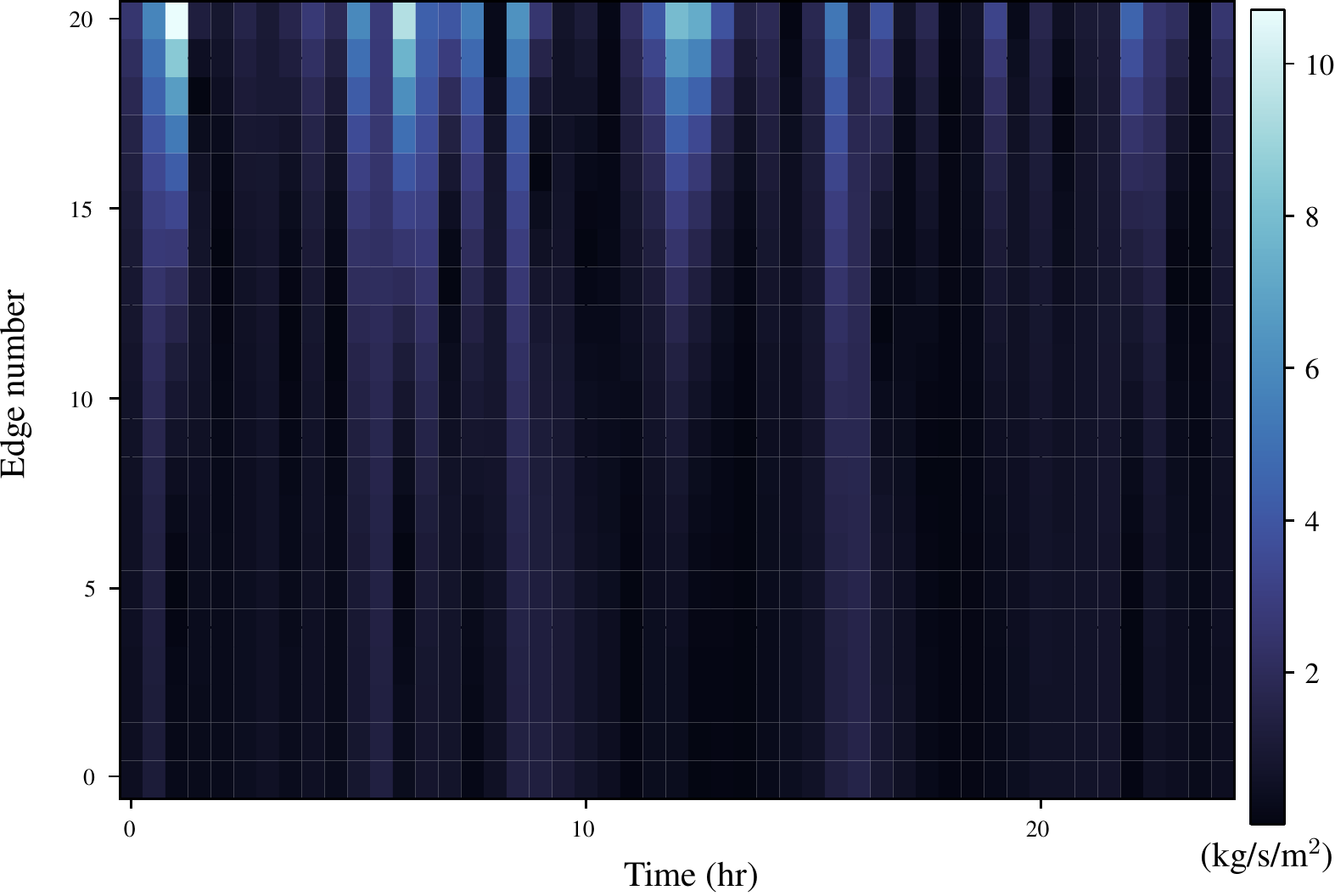}
    \caption{Absolute error in the average mass flux estimates at the edges for the single pipe case for a noise level of $1.5\%$ in the measurements. Each row in the image corresponds to the time-series of mass flux estimates for one of the 20 edges in the refined graph with 5 km pipe segments.}
    \label{fig:s_flux}
\end{figure}

\subsection{Error performance of the joint state  and parameter estimation problem} \label{subsec:error_results_state_param}

In this section, we present tables and plots similar to the tables \ref{tab:1_1} -- \ref{tab:1_3} and Fig. \ref{fig:s_pressure} and \ref{fig:s_flux}, respectively for the joint state  and parameter estimation problems. In addition, we also present the absolute error in friction factor estimates for each pipeline for each of the 3 test instances. The maximum and average relative errors in the state estimates and the withdrawal estimates is given by tables \ref{tab:2_1} -- \ref{tab:2_3} and the absolute error profiles for the single pipe case is shown in Fig. \ref{fig:sp_pressure} and \ref{fig:sp_flux}. The error trends in the state and withdrawal estimates are similar to the ones obtained by doing state estimation separately. Unlike the results for the state estimation, the relative error in the state estimates obtained via the formulation for joint state  and parameter estimation do not decrease at all times with decreasing noise levels; this is due to the fact that the error in the parameter estimates can in turn affect the error in the state estimates and these errors are related to each other in a nonlinear fashion (see Eq. \eqref{eq:phi_vec_f}).

As in the previous section, the average relative errors in the state estimates and withdrawal estimates obtained by solving the joing state and parameter estimation problem in Eq. \eqref{eq:state_2} decrease with the decrease in noise variance. The average relative errors are fairly small for measurement uncertainty of 1.5\% or less, which indicates that the least squares approach is effective in estimating all the states of the system. However, at an elevated uncertainty level of 10\%, the maximum and average relative errors in flow estimates are high for both the 4-node and 25-node case studies, although the relative error in pressure estimates is low, with less than 2\% average relative error.  Thus, the method in the test case estimates pressure more accurately than flow, and this trade-off could be calibrated by changing the weighting matrices $W_1$ and $W_2$ in the objective function \eqref{eq:state_2_obj}.

The parameter estimates for the 4-node and 24-node study are illustrated in Figures \ref{fig:p_4node} and \ref{fig:p_25node}, respectively.  We see that the error is quite high for even the low noise cases, but this error depends substantially on the pipe segment length, and we interpret the significance of this in the next section.  When performing the optimization of the joint state and parameter estimation problem, we place constraints on the parameter estimates at 50\% and 200\% of the true values (i.e., at 0.005 and 0.02), and we see that for the 10\% noise level these constraints are binding.  As a result, at this high uncertainty level the approach yields a mixed result; the pressure state estimation performs well, while the flow state estimation and the friction parameter estimation does not produce an acceptable outcome.  These results motivate additional work beyond this initial study to test the effect of uncertainty level and weighting of the objective function on state and parameter estimation accuracy.

\begin{table}[ht]
    \centering
        \caption{State estimation errors for the single pipe case.}
    \label{tab:2_1}
    \begin{tabular}{crrrrrr}
    \toprule
        $\bm{nl}$ & $e^d_{\max}$ & $e^p_{\max}$ & $e^{\varphi}_{\max}$ & $e^d_{\operatorname{avg}}$ & $e^p_{\operatorname{avg}}$ & $e^{\varphi}_{\operatorname{avg}}$\\ 
    \midrule
    $10$ & $18.19$ & $21.99$ & $16.85$ & $5.33$ & $1.31$ & $1.73$ \\
    $1.5$ & $2.73$ & $2.45$ & $2.47$ & $1.06$ & $0.19$ & $0.33$ \\
    $1.0$ & $2.42$ & $1.82$	& $2.24$ & $0.70$ & $0.14$ & $0.29$ \\
    $0.5$ & $1.29$ & $0.94$ & $1.16$ & $0.45$ & $0.09$ & $0.18$ \\
    \bottomrule
    \end{tabular}
\end{table}

\begin{table}[ht]
    \centering
        \caption{State estimation errors for the 4-node network.}
    \label{tab:2_2}
    \begin{tabular}{crrrrrr}
    \toprule
        $\bm{nl}$ & $e^d_{\max}$ & $e^p_{\max}$ & $e^{\varphi}_{\max}$ & $e^d_{\operatorname{avg}}$ & $e^p_{\operatorname{avg}}$ & $e^{\varphi}_{\operatorname{avg}}$\\ 
    \midrule
    $10$ & $25.88$ & $13.92$ & $88.88$ & $7.52$ & $1.69$ & $20.26$ \\
    $1.5$ & $3.14$ & $1.76$ & $37.14$ & $0.92$ & $0.38$ & $16.94$ \\
    $1.0$ & $1.67$ & $1.98$ & $7.80$ & $0.64$ & $0.33$ & $2.80$ \\
    $0.5$ & $1.51$ & $1.18$ & $3.04$ & $0.43$ & $0.21$ & $0.41$ \\
    \bottomrule
    \end{tabular}
\end{table}

\begin{table}[ht]
    \centering
        \caption{State estimation errors for the 25-node network.}
    \label{tab:2_3}
    \begin{tabular}{crrrrrr}
    \toprule
        $\bm{nl}$ & $e^d_{\max}$ & $e^p_{\max}$ & $e^{\varphi}_{\max}$ & $e^d_{\operatorname{avg}}$ & $e^p_{\operatorname{avg}}$ & $e^{\varphi}_{\operatorname{avg}}$\\ 
    \midrule
    $10$ & $79.08$ & $10.72$ & $229.73$ & $34.91$ & $2.00$ & $47.53$ \\
    $1.5$ & $23.53$ & $1.56$ & $189.33$ & $4.74$ & $0.69$ & $6.07$ \\
    $1.0$ & $11.10$ & $1.29$ & $124.44$ & $2.90$ & $0.22$ & $3.22$ \\
    $0.5$ & $6.64$ & $0.92$ & $82.85$ & $1.56$ & $0.12$ & $1.88$ \\
    \bottomrule
    \end{tabular}
\end{table}

\begin{figure}[ht!]
    \centering
    \includegraphics[scale=0.5]{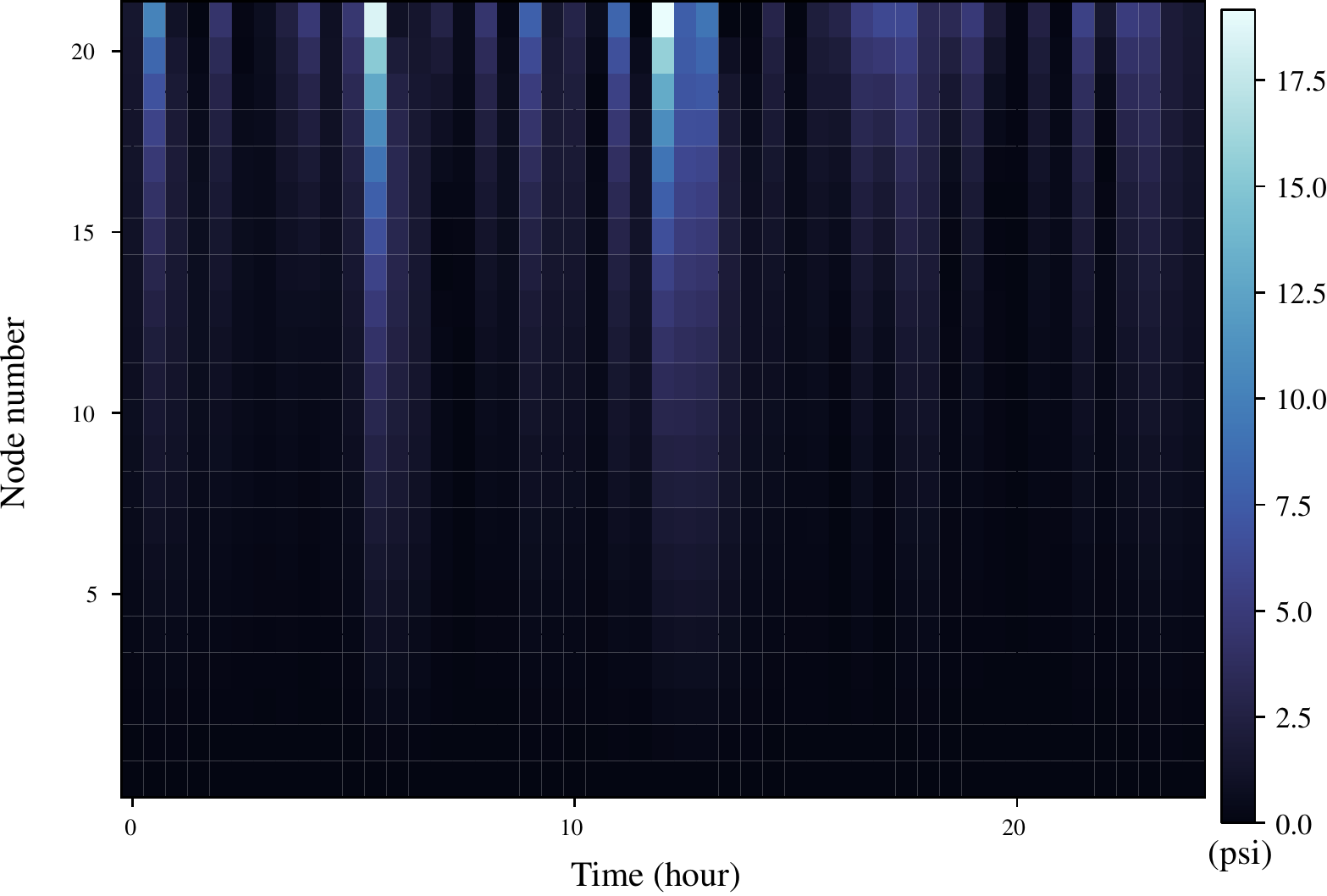}
    \caption{Absolute error in the nodal pressure estimates for the single pipe case for a noise level of $1.5\%$ in the measurements.  Each row in the image corresponds to the time-series of pressure estimates for one of 21 nodes in the refined graph with 5 km discretization of the 100 km pipe.}
    \label{fig:sp_pressure}
\end{figure}

\begin{figure}[ht!]
    \centering
    \includegraphics[scale=0.5]{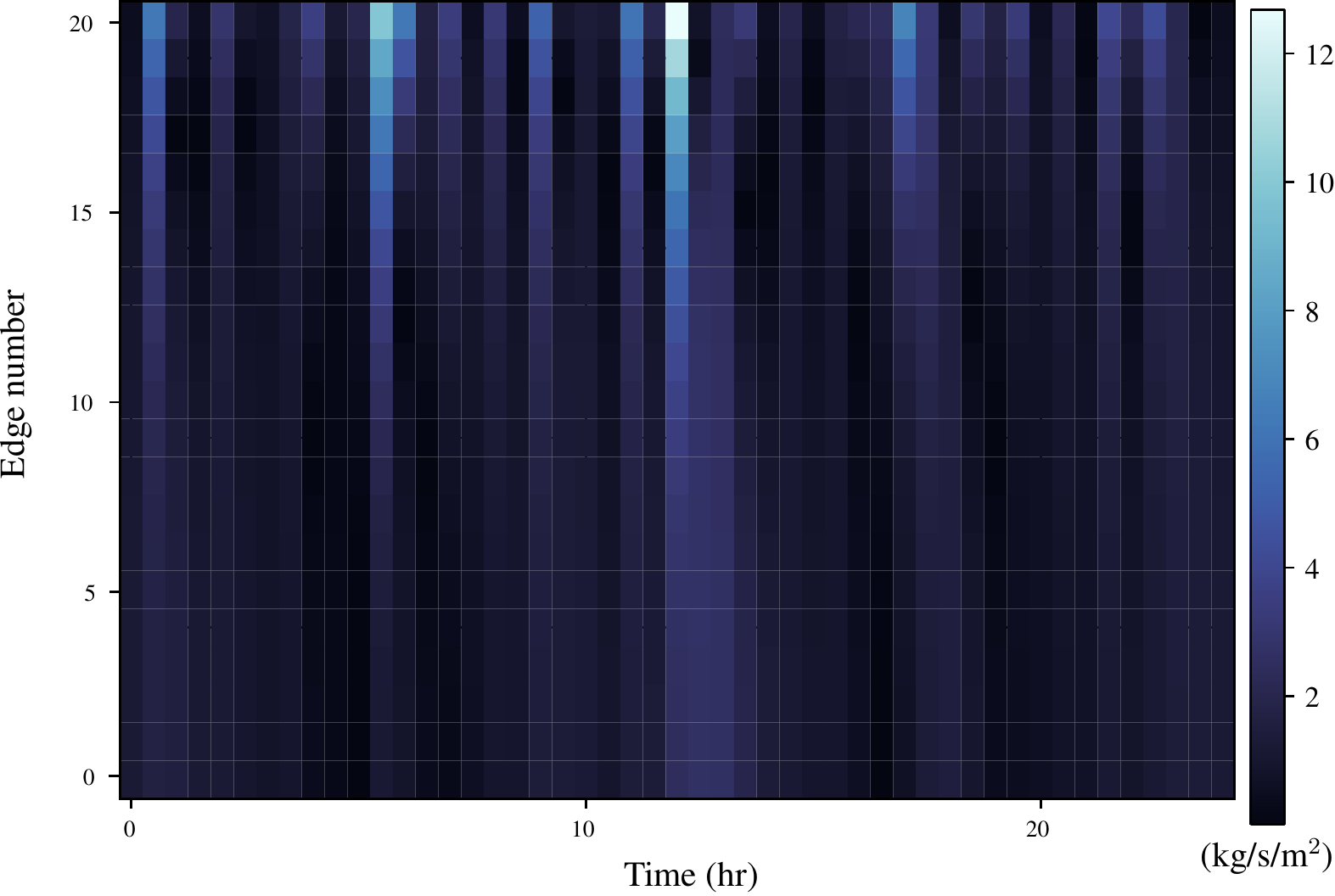}
    \caption{Absolute error in the average mass flux estimates at the edges for the single pipe case for a noise level of $1.5\%$ in the measurements. Each row in the image corresponds to the time-series of mass flux estimates for one of the 20 edges in the refined graph with 5 km pipe segments.}
    \label{fig:sp_flux}
\end{figure}

The table \ref{tab:3_1} shows the true and estimated value of the friction factor for the single pipe case. The Fig. \ref{fig:p_4node} and \ref{fig:p_25node} present a plot of the parameter estimates using a single run for the 4-node network and 25-node network, respectively. From the Fig. \ref{fig:p_4node} and \ref{fig:p_25node}, we observe that the parameter estimates are very sensitive the noise in the measurements which in turn affect the state estimates. 

\begin{table}
    \centering
    \caption{Pipeline friction factor estimates for the single pipe case}
    \label{tab:3_1}
    \begin{tabular}{crr}
    \toprule
        $\bm{nl}$ & $\lambda_{\operatorname{true}}$ & $\lambda_{\operatorname{estimated}}$ \\
    \midrule
    $10$ & $0.011$ & $0.0108$ \\
    $1.5$ & $0.011$ & $0.0110$ \\
    $1.0$ & $0.011$ & $0.0109$ \\
    $0.5$ & $0.011$ & $0.0110$ \\
    \bottomrule
    \end{tabular}
\end{table}

\begin{figure}
    \centering
    \includegraphics[scale=1]{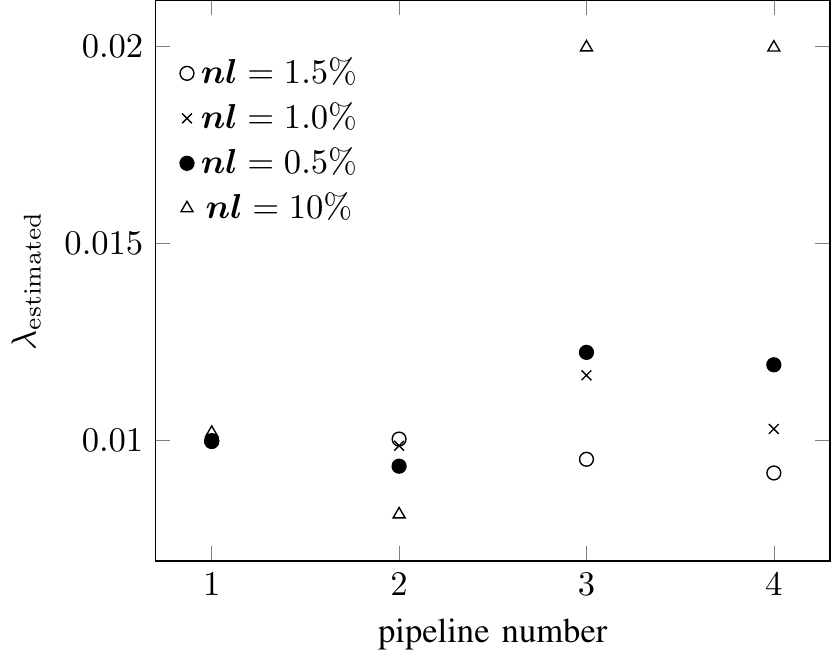}


    
    

    \caption{Parameter estimates for the 4-node network. The $\lambda_{\operatorname{true}}$ for each pipeline is $0.01$}
    \label{fig:p_4node}
\end{figure}

\begin{figure}
    \centering
    \includegraphics[scale=1]{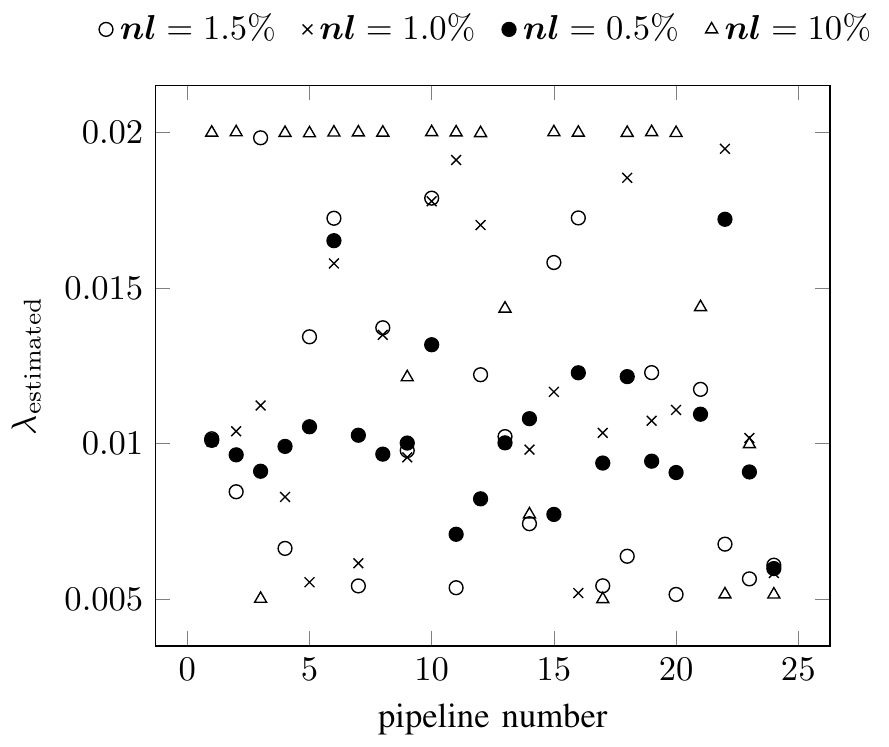}


    
    

    \caption{Parameter estimates for the 25-node network computed for different uncertainty or noise levels. The $\lambda_{\operatorname{true}}$ for each pipe is $0.01$}
    \label{fig:p_25node}
\end{figure}

\subsection{Weighted relative bias in the parameter estimates} \label{subsec:bias}
We now present a measure, which we utilize to study the errors in the parameter estimates when performing multiple runs of the joint state and parameter estimation algorithm. For this study, we will present results only for the 25-node network. We perform $n=30$ runs of the algorithm with a fixed noise level ($1.5\%$) in all the measurements. For this noise level, let $\lambda_{\operatorname{estimated}}^{n,ij}$ denote the friction factor estimated for pipe $(i,j)$ and run $n$. Also, let $\lambda_{\operatorname{true}}^{ij}$ denote the true value of the friction factor corresponding to pipe $(i,j)$. Given these notations, we define the \emph{weighted relative bias} in the friction factor estimate for pipe $(i,j)$ as 
\begin{flalign}
e_{ij}^n = \frac{\left(\frac 1n \sum_{k=1}^n \lambda_{\operatorname{estimated}}^{k,ij} - \lambda_{\operatorname{true}}^{ij}\right)}{\lambda_{\operatorname{true}}^{ij}} \cdot \frac{L_{ij}}{L_{\max}} \label{eq:measure}
\end{flalign}
where, $L_{\max}$ is the length of the longest pipe in the network. We remark that $e_{ij}^n$ is weighted by the ratio of the length of the pipe $(i,j)$ to the maximum length of any pipe in the network. The rationale behind this weighting is that for shorter pipes, the relative error in friction factor estimates can potentially be much larger than those for the longer pipes. This behaviour is due to the assumption in Eq. \eqref{eq:lumping_condition} which states that the changes in density and mass flux values over small pipeline segments of length $L$ are fairly small. Hence, for any pipe $(i,j) \in \mathcal E$ with length $L_{ij} \approx L$, the effect of noise in \eqref{eq:dae1b} is predominant which in turn results in larger parameter estimate errors.  We conclude that accuracy in estimation of the friction factor of a very short pipe is less important than estimating that of a long pipe for modeling an entire pipeline.  The rationale behind the weighted accuracy metric in \eqref{eq:measure} is to weight the importance of accurate estimation for a given pipe with respect to its influence on accuracy of modeling overall pipeline system dynamics.  The Fig. \ref{fig:rb} shows the value of $e_{ij}^n$ plotted for each run of the algorithm and for each pipe $(i,j) \in \mathcal E$ in the 25-node network. 

\begin{figure}
 \centering
    \includegraphics[scale=0.55]{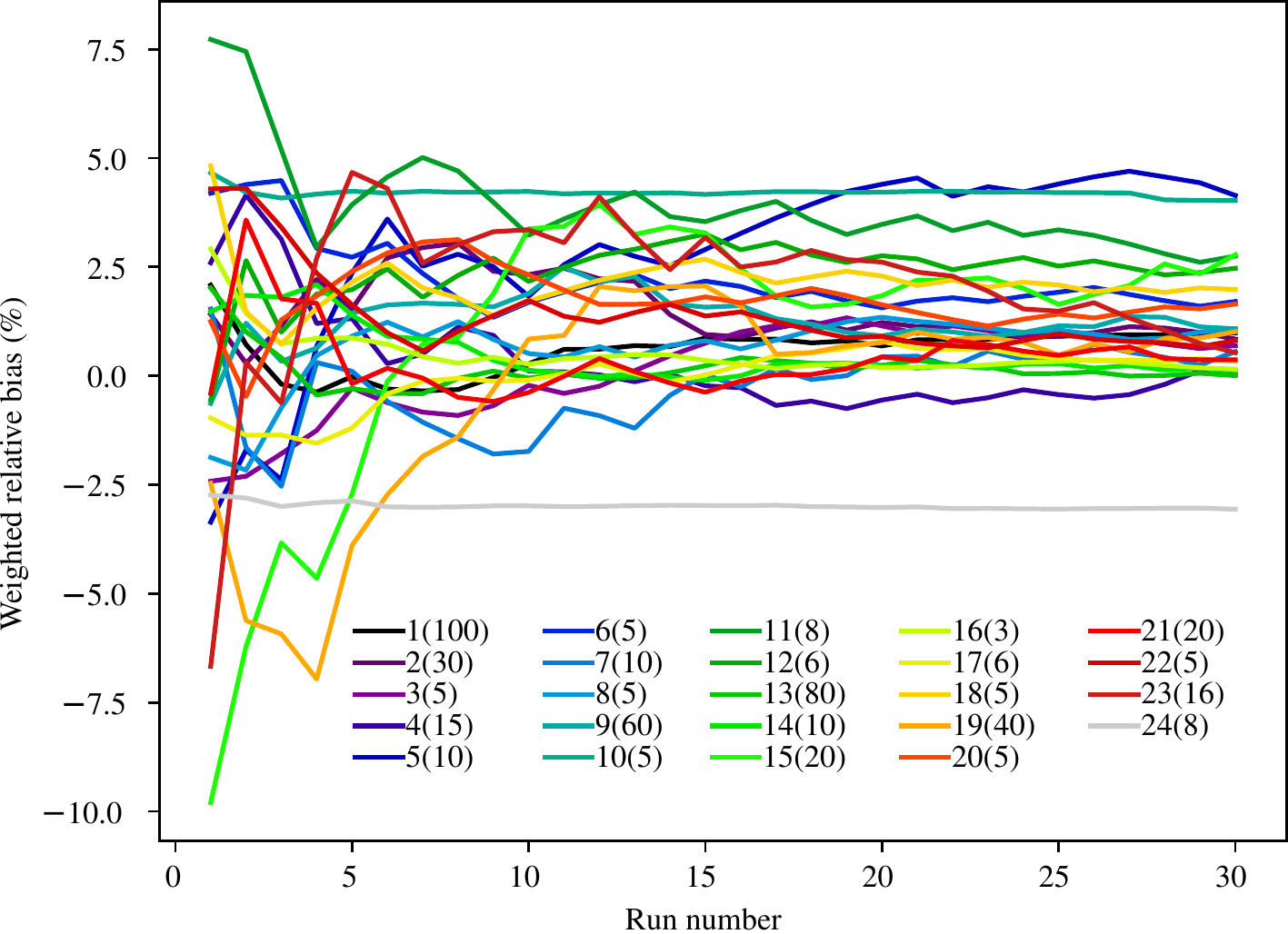}
    \caption{Weighted relative bias plot for every pipe $(i,j)\in \mathcal E$ in the 25-node network. The number in parentheses in the legend entries denote the length of the corresponding pipe in kilometers.}
    \label{fig:rb}
\end{figure}

\section{Conclusion} \label{sec:conclusion}
This manuscript develops formulations and algorithms for solving the state estimation and joint state and parameter estimation problems for the flow of natural gas through a large-scale network of pipelines with actuation by compressors. We have presented approaches to physical and engineering modeling, model reduction, control system modeling, and uncertainty modeling.  We have also derived a uniqueness result for the time-periodic boundary value problem for the discretized pipeline flow equations, which implies that pipeline state (pressure) measurements are not required for estimation when withdrawals are known exactly.  The proposed method is seen to be effective in computing state and parameter estimates in empirical studies on multiple test cases. In the computational case studies, the method consistently provides low error estimates of the pressures throughout the network under various conditions, although in the case of high uncertainty the performance of flow and friction parameter estimation degrades.

Future work would focus on testing these algorithms on actual time-series data obtained from real gas pipeline networks, effectively solving estimation problems using only sparse pressure measurements, and extending the algorithms robust to be robust to outliers in the measurements.  In addition, the modeling formalism developed here could be applied to develop leak detection techniques for sparsely instrumented systems.  Furthermore, the development of a Bayesian filtering approach for the DAE systems would be of general interest, and could be applied to reduced natural gas network dynamic models for use in pipeline system applications.

\section*{Acknowledgements}
This work was carried out as part of Project GECO for the Advanced Research Project Agency-Energy of the U.S. Department of Energy under Award No. DE-AR0000673. Work at Los Alamos National Laboratory was conducted under the auspices of the National Nuclear Security Administration of the U.S. Department of Energy under Contract No. DEAC52-06NA25396, and was supported by the Advanced Grid Modeling Research Program in the U.S. Department of Energy Office of Electricity. This work was also partially supported by the Center for Nonlinear Studies at Los Alamos National Laboratory.

\appendices
\section{Eigenvalue equation for Eq. \eqref{eq:pde_1}}\label{app:eig}
In this appendix, we present the eigenvalue equation for the system of partial differential equations in Eq. \eqref{eq:pde_1}. The eigenvalue equation is given by
\begin{flalign}
\begin{bmatrix}
\partial_x \rho \\ \partial_x \varphi 
\end{bmatrix} + 
\begin{bmatrix}
0 & 0 \\ 1 & 0
\end{bmatrix}
\begin{bmatrix}
\partial_t \rho \\ \partial_t \varphi 
\end{bmatrix} = 
\begin{bmatrix}
0 & \frac 1{a^2} \\ 1 & 0
\end{bmatrix}
\begin{bmatrix}
0 \\ -\frac{\lambda}{2D}\frac{\varphi |\varphi|}{\rho}
\end{bmatrix}.
\label{eq:eig}
\end{flalign}
The matrix $\left[ 0 ~0; 1~ 0 \right]$ in Eq. \eqref{eq:eig} has two repeated eigenvalues, both being $0$, indicating that original system of equations in Eq. \eqref{eq:pde_1} is parabolic in nature \cite{Renardy2006}. 

\section{Reduction of Eq. \eqref{eq:dae} to Eq. \eqref{eq:dae_final}} \label{app:dae}
In this appendix, we show that the Eqs. \eqref{eq:dae} can equivalently be represented by the DAE system in Eq. \eqref{eq:dae_final} using the graph-theoretic notation introduced in Sec. \ref{subsec:cs_model}. We remark that additional definitions and notations would be introduced in the derivation, as and when required. To that end, we first rewrite Eq. \eqref{eq:dae0b} in matrix form as follows:
\begin{flalign}
& \bm d = \bar{A}_d X \bar{\bm \varphi} + \ubar{A}_d X \ubar{\bm \varphi} & \label{eq:d_vec}  
\end{flalign}
where, $\bar{A}_d$ and $\ubar{A}_d$ are the positive and negative parts of the matrix $A_d$, respectively. We now define $\bm \Phi_{-} = \frac 12 (\bar{\bm \varphi} - \ubar{\bm \varphi})$. Using the definition of $\bm \Phi_{-}$, the Eq. \eqref{eq:d_vec} can be rewritten as 
\begin{flalign}
& \bm d = A_d X \bm \Phi + |A_d| X \bm \Phi_{-}. & \label{eq:d_veca}  
\end{flalign}
On the other hand, Eqs. \eqref{eq:dae0a}, \eqref{eq:dae0c}, and \eqref{eq:dae0e} together with the definition $\bm \Phi_{-}$ can be equivalently represented using the following matrix equation:
\begin{flalign}
& |B_s^\intercal| \dot{\bm s} + |B_d^\intercal| \dot{\bm \rho} = -4 \Lambda^{-1} \bm \Phi_{-}.  & \label{eq:rho_dynamics_vec}
\end{flalign}
Substituting Eq. \eqref{eq:d_veca} into Eq. \eqref{eq:rho_dynamics_vec} and eliminating $\bm \Phi_{-}$, we obtain
\begin{flalign*}
& |A_d| X \Lambda |B_d^\intercal|\dot{\bm \rho} = 4(A_d X \bm \Phi - \bm d) - |A_d| X \Lambda |B_s^\intercal| \dot{\bm s}  &
\end{flalign*}
which is the same as Eq. \eqref{eq:dae1a}. Then, Eq. \eqref{eq:dae0d} can be rewritten as 
\begin{flalign}
& \ubar{\rho}_{ij}^2 - \bar{\rho}_{ij}^2 = -\frac{\lambda \ell_0 L}{4D_{ij}} \Phi_{ij}|\Phi_{ij}|, \, \forall (i,j) \in \hat{\mathcal E}. & \label{eq:phi_vec}
\end{flalign}
Using Eq. \eqref{eq:dae0c} and the definitions $B$, $\Lambda$, and $K$ the above equation can be equivalently written in matrix form as
\begin{flalign*}
&  \Lambda K \bm \Phi \odot \bm \Phi =  -B^\intercal \bm \rho^N \odot |B^\intercal| \bm \rho^N, &
\end{flalign*}
completing the derivation. 

\bibliographystyle{plain}
\bibliography{spe.bib}

\end{document}